\documentclass[11pt,letterpaper]{article}

\usepackage[utf8]{inputenc}
\usepackage[T1]{fontenc}
\usepackage{times}
\usepackage[english]{babel}
\usepackage[letterpaper]{geometry}
\geometry{letterpaper,lmargin= {0.75in} ,rmargin={0.75in}, top = {1.0in}, bottom = {0.75in}, includefoot}

\usepackage{setspace}
\onehalfspacing
\usepackage{indentfirst}

\usepackage{amsmath, amsfonts, amssymb, amsthm, mathtools, actuarialsymbol}
\theoremstyle{definition}
\newtheorem{definition}{Definition}[section]

\newtheorem{remark}{Remark}[section]
\theoremstyle{plain}
\newtheorem{theorem}{Theorem}[section]

\newtheorem{corollary}[theorem]{Corollary}
\numberwithin{equation}{section}
\newtheorem{proposition}[theorem]{Proposition}

\usepackage{graphicx}
\usepackage{booktabs}
\usepackage{caption}
\usepackage{subcaption}
\usepackage[linesnumbered,ruled,vlined]{algorithm2e}
\usepackage{longtable}

\usepackage{authblk}
\usepackage{fancyhdr}
\usepackage[authoryear]{natbib}
\setlength{\headheight}{13.59999pt}

\usepackage{hyperref}
\hypersetup{
    colorlinks=true,
    allcolors=blue,
    linktoc=all,
    pdfborder={0 0 0},
}
\usepackage[capitalise]{cleveref}

\pagestyle{fancy}
\fancyhf{} %

\fancyhead[L]{Birungi \& Hyndman}
\fancyhead[C]{Optimal Annuitization with Labor Income}
\fancyhead[R]{Oct. 10, 2025}
\fancyfoot[c]{\thepage}

\title{Optimal annuitization with labor income under age-dependent force of mortality}

\author[1]{Criscent Birungi}
\author[1]{Cody Hyndman\thanks{Corresponding author. Emails: 
    \href{mailto:criscent.birungi@concordia.ca}{criscent.birungi@concordia.ca}, 
    \href{mailto:cody.hyndman@concordia.ca}{cody.hyndman@concordia.ca}}}

\affil[1]{Department of Mathematics and Statistics \\ Concordia University \\ Montr\'eal, QC, Canada}

\date{October 10, 2025}

\begin{document}

\maketitle

\begin{abstract}
\noindent We consider the problem of optimal annuitization with labour income, where an agent aims to maximize utility from consumption and labour income under age-dependent force of mortality. Using a dynamic programming approach, we derive closed-form solutions for the value function and the optimal consumption, portfolio, and labor supply strategies. Our results show that before retirement, investment behavior increases with wealth until a threshold set by labor supply. After retirement, agents tend to consume a larger portion of their wealth. Two main factors influence optimal annuitization decisions as people get older. First, the agent's perspective (demand side); the agent's personal discount rate rises with age, reducing their desire to annuitize. Second, the insurer's perspective (supply side); insurers offer higher payout rates (mortality credits). Our model demonstrates that beyond a certain age, sharply declining survival probabilities make annuitization substantially optimal, as the powerful incentive of mortality credits outweighs the agent's high personal discount rate. Finally, post-retirement labor income serves as a direct substitute for annuitization by providing an alternative stable income source. It enhances the financial security of retirees.

\vspace{1em} 

\noindent\textbf{Keywords:} Stochastic control; Optimal annuitization; Labor income; Optimal stopping; Force of mortality; Dynamic programming 
\end{abstract}

\raggedbottom

\section{Introduction}

\noindent An annuity is a contract that gives buyers a guaranteed, regular income. Annuitization converts a lump sum into regular income and is a critical financial decision, particularly for retirees. As \cite{buttarazzi2025optimal} notes, this is a retirement decision where individuals trade investment growth for long-term stability and guaranteed income for life. This decision is increasingly complex due to rising longevity risk, a demographic shift evidenced by growing labor force participation among older age groups \cite{gao2022optimal}. For instance, the Bureau of Labor Statistics projects that by 2033, the labor participation rate will exceed $10\%$ for those aged 75 and older and surpass $30\%$ for those aged 65. Many factors, such as labor income, contribute to post-retirement welfare, or bridge employment, which transforms conventional retirement planning \citep[see][]{klotz2021, lorenz2021, buttarazzi2025optimal}. The primary challenge is to determine how an agent can maximize lifetime utility from consumption and labor income, given age-dependent mortality risk, while managing wealth optimally.

This study addresses the agent's optimization problem within the framework of stochastic optimal control. The evolution of the agent's wealth is described by a stochastic differential equation (SDE) influenced by their strategic choices in investment ($\pi$), consumption ($c$), and labor ($b$). We employ the dynamic programming approach, specifically the Hamilton-Jacobi-Bellman (HJB) methodology, to solve this problem.
This approach, developed in the seminal works of \citet{merton1969lifetime, merton1975optimum} and \citet{karatzas1986explicit}, constitutes a central methodology for addressing such problems, complementing alternative frameworks such as the duality and martingale methods. Foundational contributions to the theory of duality were made by \citet{bismut1973conjugate}, while the comprehensive monograph by \citet{rockafellar1998variational} provides the authoritative modern treatment of optimization and duality theory. This work substantially extends the classical duality framework introduced in \citet{rockafellar1970convex}, generalizing it to encompass nonconvex and nonsmooth settings. Subsequent applications of these ideas include, among others, \citet{karatzas2000utility}, \citet{gerrard2012choosing}, and \citet{gao2022optimal}.

Our work builds on a rich literature exploring optimal consumption, leisure, and investment choices under various utility functions and constraints \citep[see][]{cvitanic1992convex, jin2006disutility, labbe2007convex, choi2008optimal, barucci2012optimal, koo2013optimal,  lee2015optimal, heunis2015quadratic, heunis2015utility, peng2023optimal, ferrari2023optimal}. A significant portion of the existing literature on optimal annuitization, including the works of \cite{lim2008optimal}, \cite{gerrard2012choosing}, and \cite{gao2022optimal}, simplifies the problem by assuming a constant mortality rate. This assumption implies that the agent's lifetime follows an exponential distribution with a constant mortality rate, which is far from realistic. By relaxing the constant mortality assumption, we extend the model of \cite{gao2022optimal}. Our objective is to incorporate a more realistic age-dependent force of mortality into the agent's utility maximization problem. Consequently, we proceed with a different objective function and approach to solve the problem. We adapt the dynamic programming framework from \cite{koo2013optimal}, providing rigorous proofs to derive closed-form solutions for the value function and the corresponding optimal consumption, labor, and portfolio allocation strategies. This approach avoids the guesswork often associated with developing auxiliary market models. Ultimately, we provide and implement results that have direct practical applications for retirement planning in an era of increasing longevity.

The rest of this paper is structured as follows. Section \ref{market_model} discusses the market model, including the economic background, mathematical formulations, and the stochastic optimal control problems. We propose methods for determining optimality and provide rigorous proofs to derive closed-form solutions for the value function associated with the optimal annuitization problem with labor income under an age-dependent force of mortality. Additionally, we present the main theoretical results for the optimal consumption, labor, and portfolio allocation strategies. Section \ref{Numerical_Implementation} presents numerical implementation results and key findings of our study. Finally, Section \ref{Conclusion_Recommendations} presents conclusions and recommendations for future research, and potential directions for further work.

\section{The Market Model}\label{market_model}
Consider a financial market consisting of two assets: a risk-free asset \( S_t^0 \), evolving as
\[
dS_t^0 = r S_t^0 \, dt,
\]
where \( r \) is the constant risk-free rate, and a risky asset \( S_t^1 \), following
\[
dS_t^1 = \mu S_t^1 \, dt + \sigma S_t^1 \, dW_t,
\]
where \( \mu \) is the expected return, \( \sigma \) is the volatility, and \( W_t \) is a standard Brownian motion. The agent’s wealth \( X_t \) is allocated between these two assets. Let
    \( \phi_t^0 \) be the units held in the risk-free asset,
 and    \( \phi_t^1 \) and units held of the risky asset.
Then, the portfolio value, or total wealth, is given by
\begin{equation}\label{total_wealth}
X_t = \phi_t^0 S_t^0 + \phi_t^1 S_t^1.
\end{equation}

We assume the portfolio is self-financing, that is
\begin{equation}\label{self_financing}
dX_t = \phi_t^0 dS_t^0 + \phi_t^1 dS_t^1.
\end{equation}
Substituting the dynamics of \( S_t^0 \) and \( S_t^1 \) into \eqref{self_financing}, we obtain
\begin{equation}\label{wealth_dynamics}
dX_t = r \phi_t^0 S_t^0 \, dt + \mu \phi_t^1 S_t^1 \, dt + \sigma \phi_t^1 S_t^1 \, dW_t.
\end{equation}

\noindent Let \( \pi_t = \phi_t^1 S_t^1 \) denote the amount invested in the risky asset. From \eqref{total_wealth}, we have \( \phi_t^0 S_t^0 = X_t - \pi_t \). Substituting these into \eqref{wealth_dynamics}, we obtain
\begin{align}\label{wealth_equation}
dX_t &= r (X_t - \pi_t) \, dt + \mu \pi_t \, dt + \sigma \pi_t \, dW_t \nonumber \\
     &= \left[ r X_t + \pi_t (\mu - r) \right] dt + \sigma \pi_t \, dW_t. 
\end{align}
The agent's strategic decisions are represented by the control processes for consumption, $c_t$, portfolio allocation, $\pi_t$, and labor, $b_t$. The agent is endowed with a total time $\overline{L}$, which is allocated between labor $b_t$ and leisure $l_t$, such that $b_t = \overline{L} - l_t$. We define $\tau$ as the voluntary retirement time, which is treated as an optimal stopping time. The agent's choice of leisure is therefore constrained as follows
\begin{equation}\label{eq:leisure_constraint}
l_{t} = 
\begin{cases} 
l_t \in [0, L] & \text{for } 0 \leq t < \tau \quad \text{(pre-retirement)} \\ 
\overline{L} & \text{for } t \geq \tau \quad \text{(post-retirement)}
\end{cases}
\end{equation}
where $0 \leq L < \overline{L}$ are constants, following the framework of \cite{choi2008optimal} and \cite{koo2013optimal}. This specification implies that labor $b_t$ can be positive only before retirement ($t < \tau$), and labor income ceases thereafter.

Incorporating these extensions into the wealth equation \eqref{wealth_equation}, then the agent's wealth, $X_t$, evolves according to the following stochastic differential equation (SDE)
\begin{equation}\label{wealth_process}
dX_t = \left[ r X_t + \pi_t (\mu - r) - c_t + w b_t \right] dt + \sigma \pi_t \, dW_t,
\end{equation}
subject to the standard non-negativity constraint on consumption, $c_t \geq 0$. The model parameters are defined as: $w$: the constant wage rate. The market price of risk is denoted by $\theta := \frac{\mu - r}{\sigma}$.\\

\noindent Given the probability space \( (\Omega, \mathcal{F}, P) \), as in \citet{lee2015optimal}, we assume the processes \( \pi_t \), \( c_t \), and \( b_t \) are all \( \mathcal{F}_t \)-progressively measurable and satisfy the following technical conditions
\begin{equation}\label{eqn:controls}
\int_0^t \pi_s^2 \, ds < \infty, \quad \int_0^t c_s \, ds < \infty, \quad \text{and} \quad \int_0^t b_s \, ds < \infty \quad \text{for all } t \geq 0 \text{ a.s.}
 \end{equation}

\noindent Accordingly, admissible control input \( \pi_t \) to the dynamical system \eqref{wealth_process} are always restricted to the real vector space
\begin{equation}\label{eqn:Pi}
\Pi := \left\{ \pi : [0, T] \times \Omega \to \mathbb{R}^N \,\middle|\, \pi \text{ is } \mathcal{F}_t\text{-prog. meas. and } \int_0^T \|\pi_t\|^2 \, dt < \infty, \text{ a.s.} \right\}.
\end{equation}

\noindent The wealth process \( X^{(\pi, c, b)} \) can be regarded as a stochastic dynamical system with the control input \( (\pi, c, b) \), where \( \pi \in \Pi \). To ensure that \( X^{(\pi, c, b)} \) is well-defined, the consumption process \( c \) must belong to the set
\begin{equation}\label{eqn: mathcal{C}}
\mathcal{C} := \left\{ c : [0, T] \times \Omega \to [0, \infty) \;\middle|\; c \text{ is } \mathcal{F}_t\text{-prog. meas. and } \int_0^T c_t \, dt < \infty \text{ a.s.} \right\}.
\end{equation}
Define $\mathcal{B}$, the set of admissible $labor$-processes defined by 
    \begin{equation}\label{eqn: mathcal{B}}
    \mathcal{B} := \left\{ b : [0, T]  \times \Omega \to [0, \infty) \;\middle|\; b \text{ is } \mathcal{F}_t\text{-prog. meas. and } \int_0^T b_t \, dt < \infty \text{ a.s.} \right\}.
    \end{equation}
Additionally, the wealth process \( X^{(\pi, c, b)} \) must not take negative values during the control interval \( [0, T] \). This ensures that the retiree cannot consume excessively, thereby avoiding negative wealth. A useful way to model these constraints is through a stipulated closed convex set \( K \subset \mathbb{R}^N \). %
The set of admissible control strategies is defined by the conditions detailed in the following remark.

\begin{remark}[Admissibility Conditions]\label{rem:admissibility}
For the optimization problem to be well-posed, we impose the following conditions on the control variables and the resulting wealth process.
\begin{enumerate}
    \item \textit{Control Set Constraints:}\label{item:control_constraints}
    The control variables $\pi(t, \omega)$ and $b(t, \omega)$ must belong to their respective admissible sets $K$ and $K_b$ for almost every $(t, \omega) \in [0,T] \times \Omega$. The set $K_b \subset \mathbb{R}^{N}$ is a closed convex set that constrains the labor choice $b_t$.

    \item \textit{Stopping Time:}\label{item:stopping_time}
    The retirement time $\tau$ must be an $\{\mathcal{F}_t\}$-stopping time. This ensures that the decision to retire at time $\tau$ depends only on information available up to that point, preventing any reliance on future knowledge.

    \item \textit{Finite Horizon:}\label{item:finite_horizon}
    The stopping time $\tau$ must occur on or before the terminal time $T$ almost surely.  That is, $\tau \leq T, \quad \mathbb{P}-\text{a.s.}$. This assumption imposes a finite time horizon on the control process.

    \item \textit{Wealth Admissibility:}\label{item:wealth_admissibility}
    A control tuple \( (\pi, c, b, \tau) \) is admissible if, for an initial wealth $X_0 > -\frac{w \overline{b}}{r}$, where \( \frac{w \overline{b}}{r} \) represents the present value of the agent's future labor income and the corresponding wealth process $X_t^{(\pi, c, b)}$ remains strictly greater than this lower bound for all $t \in [0, \tau]$. This is the minimum wealth or solvency threshold from which the agent's first behavior regime starts. It represents the amount that can be made up with labor income only and prevents the agent's wealth from falling below this threshold. The agent may consume and invest as long as initial wealth \( X_0 > -\frac{w \overline{b}}{r} \).
\end{enumerate}
\end{remark}

\noindent The set of admissible strategies, denoted by $\mathcal{A}$, consists of all control tuples $(\pi, c, b, \tau)$ that satisfy the conditions outlined in Remark \eqref{rem:admissibility}. We can formally define this set as:
\begin{equation}\label{admissible_control_inputs}
\mathcal{A} := \left\{ (\pi, c, b, \tau) \in \Pi \times \mathcal{C} \times \mathcal{B} \times \mathcal{T} \mid \text{Conditions \eqref{item:control_constraints}-\eqref{item:wealth_admissibility} are satisfied} \right\}
\end{equation}
Here, $\Pi$, $\mathcal{C}$, and $\mathcal{B}$ represent the sets of admissible investment, consumption, and labor processes, respectively. The set $\mathcal{T}$ contains all $\{\mathcal{F}_t\}$-stopping times $\tau$ such that $\tau \leq T$ $\mathbb{P}$-almost surely.

\begin{definition}\label{defn:admissible}
An investment-consumption-labor  strategy $(c, b, \pi, \tau)$ is said to be admissible if equation \eqref{eqn:controls} holds, almost surely.

\end{definition}
\noindent We now formally define the agent's utility maximization problem. The agent's preferences are governed by two concave utility functions: $u_1$, which captures the utility from consumption and labor before annuitization, and $u_2$, which represents the utility from wealth at the moment of annuitization. Following the framework of \cite{gerrard2012choosing}, the agent controls the consumption rate, the portfolio allocation, and the timing of the annuitization itself. A fund of size $x$ is converted into an annuity of $kx$, where the annuity factor $k>r$ and this decision is irreversible. If the fund is exhausted before this point, all economic activity, including further investment or withdrawal, ceases.

The agent derives utility \( u_1(c, b) \) from a payment of size \( c \) and  \( b \) before annuitization and \( u_2(kx_) \) from the annuity payment after annuitization. The agent seeks to maximize the sum of these expected discounted utilities over all \( (c, b, \pi, \tau) \in \mathcal{A} \). That is, we aim to determine \( (\pi^*, c^*, b^*, \tau^*) \in \mathcal{A} \) such that
\begin{equation}\label{eq:sup_value_01}
V(x) = \sup_{(c, b, \pi, \tau) \in \mathcal{A}} \mathbb{E} \left[\int_0^{T_D \wedge \tau} e^{-\beta t} u_1(c_t, b_t)) \, dt + \mathbf{1}_{ \{\tau < T_D\} } \int_{T_D \wedge \tau}^{T_D} e^{-\beta t} u_2(kX^{c, b, \pi, \tau}_{\tau}) \, dt \right],
\end{equation}
where,
    \( T_D \) is the retiree’s time of death, %
    \( \beta > 0 \) is a subjective discount rate that reflects the retiree’s time preference for money, \( kX^{c, b, \pi, \tau}_{\tau}\) is the annuity payment, and \(\mathcal{A}\) is the set of admissible control inputs.

Let \( n \) denote the retiree's current age, and let \( T_n \) represent the retiree's uncertain remaining lifetime. In \cite{gao2022optimal} and \cite{gerrard2012choosing}, it is assumed that \( T_n \) follows a distribution with a constant force of mortality \( \delta > 0 \). Under this assumption, the probability of surviving \( t \) years is given by
\begin{equation}\label{constant_mortality} 
P(T_n > t) = exp\left(-\delta t\right).
\end{equation}
However, a more realistic model would incorporate an age-dependent force of mortality, particularly one that increases with age. Following the argument in \cite{ashraf2023voluntary}, we introduce an age-dependent force of mortality and assume the Gompertz %
deterministic mortality law. That is,
\begin{equation}\label{eq:Gompertz}
\delta_t = \frac{1}{\lambda} exp\left(\frac{n_t - m}{\lambda}\right).
\end{equation}
where $n_t$ is the age at time $t$, $m$ is the modal value of life (in years), and $\lambda$ is the dispersion parameter (in years). In this case, the probability of surviving \( t \) years is expressed as
\begin{equation}\label{eq:age_dependent_mortality}
\px[t]{n}[(\delta)] =P(T_n > t)=exp\left(-\int_{0}^{t} \delta_y \, dy\right),
\end{equation}
where \( \delta_y \) represents the force of mortality at age \( y \). 
The deterministic age-dependent force of mortality is independent of the fund's level of evolution.  With this assumption, the objective function in \eqref{eq:sup_value_01} can now be expressed as
\begin{equation}\label{eq:sup_value_12.0}
V(x)= \sup_{(c, b, \pi, \tau) \in \mathcal{A}} \mathbb{E} \left[ \int_0^\tau e^{-\int_0^{s} (\beta + \delta_u) du} u_1(b_s, c_s) ds  + \frac{e^{-\beta \tau} \, }{\beta + \delta_\tau}  \px[\tau]{n}[(\delta)] u_2(kX_{\tau}) \right],
\end{equation} 
\begin{equation}\label{eq:sup_value_12}
= \sup_{(c, b, \pi, \tau) \in \mathcal{A}} \mathbb{E} \left[\int_0^\tau e^{-(\rho_s) s} u_1(b_s, c_s) ds  + \frac{e^{-(\rho_\tau) \tau}}{\rho_\tau} u_2(kX_{\tau}) \right].
\end{equation} 
where,   \( X_{\tau} = X_\tau^{x, c, b, \pi} \) is the wealth at the time of annuitization \( \tau \), $\rho_{t}$ is the effective discount rate given by
\begin{equation}\label{eq:age-dependent_force_of_mortality}
\rho_{t} = \int_{0}^{t} (\beta + \delta_y) \, dy.
\end{equation} 

\noindent In the remainder of the paper, we focus on equation \eqref{eq:sup_value_12}. 
Our analysis initially considers a constant effective discount rate 
$\rho$  in equation \eqref{eq:sup_value_12}. The more general scenario where $\rho_t$ varies with the mortality rate $\delta_t$, is considered when we derive the closed-form solutions for the value function in Theorem \ref{thm:value_function} and the optimal joint strategy $(\pi_t^*, c_t^*, b_t^*)$ in Theorem \ref{thm:optimal_policies}. Under the constant discount rate assumption, 
the objective function in \eqref{eq:sup_value_12} can be expressed as
\begin{equation}\label{eq:sup_value}
V(x ) = \sup_{(c, b, \pi, \tau) \in \mathcal{A}} \mathbb{E}\left[ \int_0^{\tau} e^{-\rho t} u_1(c_t, b_t) \, dt + \frac{e^{-\rho\tau}}{\rho} u_2(kX_{\tau}) \right].
\end{equation}

\noindent The operation of such a scheme may be subject to local regulation, as discussed in \cite{gerrard2012choosing} as 
    \( c_t \) and \( b_t \) may be restricted to lie in a given range \( (c_{\text{min}}, c_{\text{max}}) \)
    and \( (b_{\text{min}},  b_{\text{max}}) \)
    respectively, with both minimum and maximum values dependent on \( X_0 \).
    There may also be an upper limit on the annuitization time, for example, if retirees are required to purchase an annuity by a given age, the investment strategy \( \pi_t \) may be constrained to be non-negative and/or to be no greater than unity, depending on rules regarding the possibility of borrowing to fund additional equity purchases or regarding the short selling of risky assets.
However, in this paper, we treat only the situation of unconstrained controls \( c, b, \pi, and ~\tau  \).

The admissible control inputs in equation \eqref{admissible_control_inputs} are \( \mathcal{F}_t \)-progressively measurable processes 
such that the expectation in \eqref{eq:sup_value} is well-defined, i.e.,
\begin{equation} \label{eq:value_function_alt}
\mathbb{E} \left[ \int_0^\tau e^{-\rho t} u_1^-(c_t, b_t) \, dt + \frac{e^{-\rho \tau}}{\rho} u_2^-(kX_{\tau}) \right] < \infty,
\end{equation}
where $u^-$ denotes the negative part of $u$, that is, $u^- = \max(-u, 0)$. We say that the value $V(x)$ is \emph{attainable} if there exists a quadruple $(\pi^*, c^*, b^*, \tau^*) \in \mathcal{A}$ that achieves the supremum in \eqref{eq:sup_value}. If we let
\[
J(x; c, \pi, b, \tau) \triangleq \mathbb{E} \left[ \int_0^{\tau} e^{-\rho t} u_1(c_t, b_t) \, dt + \frac{e^{-\rho \tau}}{\rho} u_2(kX_{\tau}) \right],
\]
Then, the agent’s optimization problem is given by
\begin{equation}\label{eqn:V_x}
V(x) \triangleq \max_{(c, \pi, b, \tau) \in \mathcal{A}} J(x; c, \pi, b, \tau).
\end{equation}
We assume that \( V(x) < \infty \) for all \( x \in (0, \infty) \). Without loss of generality, the function \( V(\cdot) \) is increasing on \( (0, \infty) \). For the concavity properties of \( V(\cdot) \), see Section 8 of \cite{karatzas2000utility}. 
\begin{remark}
A sufficient condition for the assumption that \( V(x) < \infty \) in Equation \eqref{eqn:V_x} is that
\begin{equation}\label{eq:5.4}
\max\{u_1(x), u_2(x)\} \leq  \kappa_1 + \kappa_2 x^{\eta} \quad \forall x \in (0, \infty),\quad \quad \quad \quad \quad \quad \quad 
\end{equation}
for some \( \kappa_1 > 0,\quad\kappa_2 > 0,\quad\eta \in (0, 1) \), (see Remark 3.6.8 in \cite{karatzas1998methods}).
\end{remark}

\noindent Next, we specify the utility functions considered in equation \eqref{eq:sup_value_12}. 
The utility function with two elasticity parameters \( \alpha \) and \( \beta \) represents the preferences for consumption and labor, respectively. 
Following \cite{koo2013optimal}, the Cobb–Douglas utility function is defined as
\begin{equation}\label{3.2}
u(c, l) = u(c, b) = \frac{1}{\alpha} \cdot \frac{\left(c^\alpha b^{1-\alpha}\right)^{1-\gamma}}{1 - \gamma}, \quad 0 < \alpha < 1, \gamma > 0 \text{ and } \gamma \neq 1,    
\end{equation}
where \( \gamma \) is the retiree’s coefficient of relative risk aversion for two different goods \( c \) and \( b \), and \( \alpha \) is a constant parameter that measures the contribution of consumption to the agent’s utility when they aren't retired. If we define \( \gamma_1 := 1 - \alpha(1 - \gamma) \), the Cobb–Douglas utility function \eqref{3.2} can be rewritten as
\[
u(c, b) = \frac{c^{1-\gamma_1} b^{\gamma_1 - \gamma}}{1 - \gamma_1}.
\]
Then, the utility function \( u_1(c, b) \) in \eqref{eq:sup_value} becomes
\begin{equation}\label{3.2_main_appendix}
u_1(c, b) = \frac{c^{1-\gamma_1} b^{\gamma_1 - \gamma}}{1 - \gamma_1},
\end{equation}
where \( c \) is the consumption rate, \( b \) is the labor income rate, and \( \gamma_1 \) is the risk aversion coefficient for %
consumption. %
If the labor rate \( b = 1 \), the utility function \( u_1 \) becomes a power utility
\[
u_1(x) = \frac{x^{1 - \gamma_1}}{1 - \gamma_1}.
\]

\subsection{Dynamic Programming Principle}
\noindent We next present the Dynamic Programming Principle (DPP) for the value function. %
Let $\mathcal{T}_{t,T}$ denote the set of stopping times taking values in $[t, T]$. For a control process $\hat{\alpha} = (c, b, \pi, \tau)$ and a process $X_s$, define $X_s^{t,x}$ as the value of the state at time $s \geq t$ starting from $X_t = x$ under the control $\hat{\alpha}$ (so $X_t^{t,x} = x$). Let $\mathcal{A}(t,x)$ be the set of admissible control processes $\hat{\alpha} = (c, b, \pi, \tau)$ on the time interval $[t, T]$, then the value function $V(t,x)$ for the problem in \eqref{eq:sup_value} can now be expressed as
\begin{align}\label{eq:DPP_1}
V(t,x) = \sup_{\hat{\alpha} \in \mathcal{A}(t,x)} \mathbb{E} \left[ \int_t^\tau f(s, X_s^{t,x}, \hat{\alpha}_s)ds + g(\tau, X_\tau^{t,x}) \right],
\end{align}
where $f(s,X^{t,x}_{s}, \hat{\alpha}_{s}) = e^{-\rho s} u_1(c(s), b(s))$ is the running cost/utility rate, $g(\tau, X_\tau^{t,x})= \frac{e^{-\rho \tau}}{\rho} u_2(kX_\tau^{t, x}) $ is the terminal utility or cost at time $\tau$. 

\begin{theorem}[Dynamic Programming Principle]\label{thm:Dynamic_programming_principle}
Let $(t, x) \in [0, T] \times \mathbb{R}^N$, for any stopping time $\tau \in \mathcal{T}_{t,T}$, then we have 
\begin{align}\label{eq:DPP_2}
        V(t,x) &= \sup_{\hat{\alpha} \in \mathcal{A}(t,x)} \sup_{\tau \in \mathcal{T}_{t,T}} \mathbb{E} \left[ \int_t^\tau f(s, X_s^{t,x}, \hat{\alpha}_s)ds + g(\tau, X_\tau^{t,x}) \right] \\
        &= \sup_{\hat{\alpha} \in \mathcal{A}(t,x)} \inf_{\tau \in \mathcal{T}_{t,T}} \mathbb{E} \left[ \int_t^\tau f(s, X_s^{t,x}, \hat{\alpha}_s)ds + g(\tau, X_\tau^{t,x}) \right],
\end{align}
 with the convention that $e^{-\rho \tau(\omega)} = 0$ when $\tau(\omega) = \infty$.
\end{theorem}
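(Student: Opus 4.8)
The plan is to prove the two displayed identities by isolating the genuine dynamic-programming content from the structural reduction that produces the first line. I read the first equality as a regrouping of the single supremum in \eqref{eq:DPP_1}: since the retirement time is one coordinate of the admissible tuple $\hat{\alpha}=(c,b,\pi,\tau)$, the supremum over $\mathcal{A}(t,x)$ factors into an outer supremum over the running controls $(c,b,\pi)$ and an inner supremum over $\tau\in\mathcal{T}_{t,T}$. To justify this I would verify that the admissibility requirements of Remark \ref{rem:admissibility} decouple, the control-set, measurability and wealth-solvency conditions constraining $(c,b,\pi)$ and the single condition $\tau\le T$ constraining the stopping time, so that letting $\tau$ range over $\mathcal{T}_{t,T}$ for each fixed running control recovers precisely the admissible tuples; no payoff is gained or lost by nesting the suprema.

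For the inner optimization I would introduce, for fixed running controls, the gain process $G_s:=\int_t^{s} f(u,X^{t,x}_u,\hat{\alpha}_u)\,du + g(s,X^{t,x}_s)$, so that the bracketed expectation equals $\mathbb{E}[G_\tau]$ and the two lines compare $\sup_{\tau}\mathbb{E}[G_\tau]$ with $\inf_{\tau}\mathbb{E}[G_\tau]$. The inequality $\sup_{(c,b,\pi)}\inf_{\tau}\le\sup_{(c,b,\pi)}\sup_{\tau}$ is automatic, so the whole content of the second equality lies in the reverse bound. That bound cannot hold for an arbitrary running control, for which $\tau\mapsto\mathbb{E}[G_\tau]$ genuinely varies; it can hold only because the outer supremum is realized along a running control for which the inner infimum is already maximal, that is, for which $G$ ceases to be a strict supermartingale and becomes a martingale.

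The step I expect to be the main obstacle is therefore establishing this martingale property at the optimizer. Evaluated at the optimal strategy $(\pi^{*},c^{*},b^{*})$ of Theorem \ref{thm:optimal_policies}, I would show that the drift of $G$ vanishes, $f+\partial_s g+\mathcal{L}^{(c^{*},b^{*},\pi^{*})}g=0$ along the optimally controlled state, so that optional stopping yields $\mathbb{E}[G_\tau]=G_t=V(t,x)$ for every $\tau\in\mathcal{T}_{t,T}$; the map $\tau\mapsto\mathbb{E}[G_\tau]$ is then constant, its supremum and infimum coincide, and both nested expressions collapse to $V(t,x)$. The delicate points are the existence and attainment of the optimizing running control, which I would obtain from the closed-form construction in Theorem \ref{thm:value_function} together with a measurable-selection and flow (strong Markov) argument for the controlled wealth SDE, and the uniform integrability legitimizing optional stopping at an arbitrary $\tau\le T$, which I would supply from the finiteness condition \eqref{eq:value_function_alt} and the growth bound \eqref{eq:5.4}.
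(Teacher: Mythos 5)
Your reduction of the first equality to a regrouping of the supremum in \eqref{eq:DPP_1} is unobjectionable, and you correctly isolate the second equality — the bound $\sup\sup \le \sup\inf$ — as the only nontrivial content. The fatal gap is in the martingale step that is supposed to deliver it. You define $G_s = \int_t^s f(u,X^{t,x}_u,\hat{\alpha}_u)\,du + g(s,X^{t,x}_s)$ and claim that at the optimal running control the drift $f + \partial_s g + \mathcal{L}g$ vanishes, so that optional stopping gives $\mathbb{E}[G_\tau] = G_t = V(t,x)$ for every $\tau$. Both halves of this are wrong. First, $G_t = g(t,x)$, not $V(t,x)$; these differ exactly in the continuation region, where $V > G$. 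Second, the HJB equation \eqref{eq:HJB_cont_region_rho_appendix} makes the drift of the running payoff plus the \emph{value function} vanish along the optimal control — the martingale of the optimality principle is $\int_t^s f\,du + V(s,X_s)$ — and it says nothing about the drift of $g$ along the optimal path: $g(s,x) = \frac{e^{-\rho s}}{\rho}u_2(kx)$ solves no such PDE, and if its drift did vanish identically then the choice of stopping time would be payoff-irrelevant, which is absurd for an optimal stopping problem. Worse, the identity you are trying to prove cannot be proved with $g$ inside the expectation, because it is false whenever $(t,x)$ is in the continuation region: $\tau \equiv t$ is a legitimate element of $\mathcal{T}_{t,T}$, so $\inf_{\tau}\mathbb{E}[G^{\hat{\alpha}}_\tau] \le g(t,x)$ for \emph{every} admissible $\hat{\alpha}$, whence the sup-inf expression is at most $g(t,x) < V(t,x)$. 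Your argument appears to close only because it silently conflates $g$ with $V$ at the two places noted above.

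The quantity that must sit inside the expectation at time $\tau$ is the continuation value $V(\tau, X^{t,x}_\tau)$ rather than the stopping reward $g(\tau, X^{t,x}_\tau)$, and that is what the paper's own proof actually manipulates: Step 1 uses iterated conditioning to write $J(t,x,\hat{\alpha}) = \mathbb{E}[\int_t^\tau f\,ds + J(\tau,X_\tau,\hat{\alpha})]$ and bounds the inner term via $J \le V$, while Step 2 concatenates $\varepsilon$-optimal controls after $\tau$ (inequality \eqref{eqn:3.22}) to obtain the matching lower bound; no optimizer needs to exist. If you replace $g$ by $V$ in your gain process, your route does become a legitimate alternative — under any control $\int_t^s f\,du + V(s,X_s)$ is a supermartingale and under an optimal one a martingale, which gives both bounds at once — but it then requires exactly the hypotheses you flagged as delicate (attainment of the optimal control from Theorems \ref{thm:value_function} and \ref{thm:optimal_policies}, plus uniform integrability), whereas the $\varepsilon$-optimal concatenation argument needs neither. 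As written, however, the proposal proves a false statement, so the martingale step cannot be repaired without changing what is being stopped.
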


\begin{remark}\label{remark:DPP}
In Theorem \ref{thm:Dynamic_programming_principle}, we shall often use the following equivalent formulation %
of the Dynamic Programming Principle
\begin{enumerate}
    \item[(i)] For all $\hat{\alpha} \in \mathcal{A}(t,x)$ and $\tau \in \mathcal{T}_{t,T}$:
    \begin{equation}
        V(t, x) \geq \mathbb{E} \left[ \int_t^\tau f(s, X_s^{t,x}, \hat{\alpha}_s)ds + g(\tau, X_\tau^{t,x}) \right]. 
    \end{equation}
    \item[(ii)] For all $\varepsilon > 0$, there exists $\hat{\alpha} \in \mathcal{A}(t,x)$ such that for all $\tau \in \mathcal{T}_{t,T}$
    \begin{equation}
        V(t, x) - \varepsilon \leq \mathbb{E} \left[ \int_t^\tau f(s, X_s^{t,x}, \hat{\alpha}_s)ds + g(\tau, X_\tau^{t,x}) \right]. 
    \end{equation}
\end{enumerate}
Then, a stronger version of the DPP for the finite horizon case in Remark \ref{remark:DPP} can be written as
\begin{equation}\label{eqn:DDP_strong}
    V(t, x) = \sup_{\hat{\alpha} \in \mathcal{A}(t,x)} \mathbb{E} \left[ \int_t^\tau f(s, X_s^{t,x}, \hat{\alpha}_s)ds + g(\tau, X_\tau^{t,x}) \right],
\end{equation}
for any stopping time $\tau \in \mathcal{T}_{t,T}$.
\end{remark}

\begin{proof}[Proof of Theorem \ref{thm:Dynamic_programming_principle}: Dynamic Programming Principle (DPP)]
We consider the finite horizon case. To prove the DPP in Theorem \ref{thm:Dynamic_programming_principle}, we proceed as follows.\\
\noindent\textbf{1.} Given an admissible control process $\hat{\alpha} \in \mathcal{A}_{t,T}$, 
for any stopping time $\tau$ valued in $[t, T]$, then by the law of iterated conditional expectation, we then get
\[ J(t, x, \hat{\alpha}) = \mathbb{E}\left[ \int_t^\tau f(s, X_s^{t,x}, \hat{\alpha}_s)ds + J(\tau, X_\tau^{t,x}, \hat{\alpha}) \right], \]
and since $J(\cdot, \cdot, \hat{\alpha}) \leq V$, and $\tau$ is arbitrary in $T_{t,T},$ then
\begin{align*}
J(t, x, \hat{\alpha}) &\leq \inf_{\tau \in \mathcal{T}_{t,T}} \mathbb{E} \left[ \int_t^\tau f(s, X_s^{t,x}, \hat{\alpha}_s)ds + g(\tau, X_\tau^{t,x}) \right] \\
&\leq \sup_{\hat{\alpha} \in \mathcal{A}(t,x)} \inf_{\tau \in \mathcal{T}_{t,T}} E \left[ \int_t^\tau f(s, X_s^{t,x}, \hat{\alpha}_s)ds + g(\tau, X_\tau^{t,x}) \right].
\end{align*}
By taking the supremum over $\hat{\alpha}$ in the left-hand side term, we obtain the inequality
\begin{equation}\label{ineq:step1_final_tau_latex}
    V(t,x) \leq \sup_{\hat{\alpha} \in \mathcal{A}(t,x)} \inf_{\tau \in \mathcal{T}_{t,T}} \mathbb{E} \left[ \int_t^\tau f(s,X^{t,x}_{s}, \hat{\alpha}_{s}) \, ds + g(\tau, X^{t,x}_{\tau}) \right].
\end{equation}

\noindent\textbf{2.} Fix some arbitrary control $\hat{\alpha} \in \mathcal{A}_{t,T}$ and $\tau \in \mathcal{T}_{t,T}$. By definition of the value functions in equation \eqref{eqn:DDP_strong}, for any $\varepsilon > 0$ and $\omega \in \Omega$, there exists $\hat{\alpha}^{\varepsilon,\omega} \in \mathcal{A}(\tau(\omega), X_{\tau(\omega)}^{t,x}(\omega))$, which is an $\varepsilon$-optimal control for $V(\tau(\omega), X_{\tau(\omega)}^{t,x}(\omega))$, i.e.
\begin{equation}\label{eqn:3.22}
V(\tau(\omega), X_{\tau(\omega)}^{t,x}(\omega)) - \varepsilon \leq J(\tau(\omega), X_{\tau(\omega)}^{t,x}(\omega), \hat{\alpha}^{\varepsilon,\omega}). 
\end{equation}
Define the process
\[ \alpha^{*}_s(\omega) =
\begin{cases}
\hat{\alpha}_s(\omega), & \text{if } s \in [0, \tau(\omega)] \\
\hat{\alpha}_s^{\varepsilon,\omega}(\omega), & \text{if } s \in (\tau(\omega), T].
\end{cases}
\]
The process $\alpha^{*}$ is progressively measurable and so lies in $\mathcal{A}(t,x)$ (see e.g., Chapter 7 in \cite{bertsekas1996stochastic}). Applying the law of iterated conditional expectation, and from equation \eqref{eqn:3.22}, we have
\begin{align*}
V(t, x) &\geq J(t, x, \alpha^{*}) = \mathbb{E} \left[ \int_t^\tau f(s, X_s^{t,x}, \hat{\alpha}_s)ds + J(\tau, X_\tau^{t,x}, \hat{\alpha}^{\varepsilon}) \right] \\
&\geq \mathbb{E} \left[ \int_t^\tau f(s, X_s^{t,x}, \hat{\alpha}_s)ds + g(\tau, X_\tau^{t,x}) \right] - \varepsilon.
\end{align*}
Since %
$\hat{\alpha} \in \mathcal{A}(t,x)$, $\tau \in \mathcal{T}_{t,T}$ and $\varepsilon > 0$, are arbitrary, we obtain the inequality
\begin{equation}\label{ineq:step2_final_tau_latex}
    V(t,x) \geq \sup_{\hat{\alpha} \in \mathcal{A}(t,x)} \sup_{\tau \in \mathcal{T}_{t,T}} \mathbb{E} \left[ \int_t^\tau f(s,X^{t,x}_{s}, \hat{\alpha}_{s}) \, ds + g(\tau, X^{t,x}_{\tau}) \right].
\end{equation}
By combining the two inequalities \eqref{ineq:step1_final_tau_latex} and \eqref{ineq:step2_final_tau_latex}, we obtain the required equality in equation \eqref{eqn:DDP_strong} for an arbitrary stopping time $\tau \in \mathcal{T}_{t,T}$. 
Thus, the proof of the Dynamic Programming Principle is completed. 
\end{proof}

\begin{remark}
For the optimization problem in \eqref{eq:sup_value} to be well-defined, several technical conditions must hold. First, we require positive annuity conversion factors, which emerge from the model parameters
\begin{equation}\label{eq:annuity_conversion_factor}
k := r + \frac{\rho - r}{\gamma} + \frac{\gamma - 1}{2 \gamma^2}  \theta^2 > 0 \quad \text{and} \quad k_1 := r + \frac{\rho - r}{\gamma_1} + \frac{\gamma_1 - 1}{2\gamma_1^2}  \theta^2 > 0.
\end{equation}
These conditions ensure the existence of optimal solutions and appear naturally when solving the Hamilton-Jacobi-Bellman equation in \eqref{eq:HJB_cont_region_rho_appendix}. This specific form reflects the trade-off between market parameters ($r$, $\theta$) and preference parameters ($\rho$, $\gamma$). 
\end{remark}

\begin{remark}
For the annuitization phase, we assume the value function $u_2(x)$ in \eqref{eq:sup_value_01}, which represents the agent's utility from an annuity purchased with wealth $x$, is analogous to the value function from the classical Merton problem \cite{koo2013optimal} and is obtained as follows

\begin{equation}
G(x) = \frac{e^{-\rho \tau}}{\rho} u_2(kX_{\tau}) =   e^{-\rho \tau} \frac{k^{1-\gamma_1}}{\rho(1-\gamma_1)} x^{1-\gamma_1}, \label{eq:post_retirement_value}
\end{equation}
which serves as our terminal condition. This formulation preserves the homothetic property of CRRA utility while incorporating the annuity conversion mechanism. %
\end{remark}

\begin{remark}
Define the quadratic function $f(m)$ for later consideration
\begin{equation}\label{eq:quadratic_main_appendix}
    f(m) := \frac{1}{2}\theta^2 m^2 + \left(\rho - r + \frac{1}{2}\theta^2\right)m - \rho = 0.
\end{equation}
whose distinct roots $m_+ > 0$ and $m_- < -1$ play crucial roles in determining the optimal strategies. The transformed quantities $p_1=m_1+1$ and $p_2=m_2+1$ are free boundary parameters associated with optimal retirement timing.
\end{remark}

\begin{definition}\label{defn:optimal_retirement_wealth_threshold}
The \textit{optimal retirement wealth threshold}, denoted by $x^*$, is the critical level of wealth at which an agent chooses to retire to maximize their lifetime utility.
\end{definition}
\begin{definition}\label{defn:subsistence_consumption_wealth_threshold}
The \textit{subsistence consumption wealth threshold}, denoted by $\tilde{x}$, is the level of wealth required to meet the minimum consumption boundary $\tilde{c}$ when the labor income component $b$ reaches its upper limit $\overline{b}$.
\end{definition}

\begin{theorem}[Value function]\label{thm:value_function}
Given that \( X_t = x \), assume the regularity conditions of Definition~\ref{defn:admissible} hold, then the value function \( V(x) \) is given by

\begin{equation}\label{eq:value_function}
V(x) =
\begin{cases}
V_{int}(x), & \text{if } 
 x < \tilde{x}, \\
V_{\bar{b}}(x), & \text{if } \tilde{x} \leq x < x^*, \\
G(x), & \text{if } x \geq x^*,
\end{cases}
\end{equation}
\noindent where $ G(x)$ is terminal condition defined in \eqref{eq:post_retirement_value}, with $\rho_{t}$, the effective discount rate given in equation~\eqref{eq:age-dependent_force_of_mortality}, which depends on the age-dependent force of mortality $\delta_{y}$ defined in equation \eqref{eq:Gompertz}. The functions $V_{\text{int}}(\cdot)$ and $V_{\overline{b}}(\cdot)$ are the interior value functions for $x < \tilde{x}$ and $\tilde{x} \leq x < x^*$, respectively.
\end{theorem}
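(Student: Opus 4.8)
The plan is to characterize $V$ as the unique sufficiently smooth, increasing, concave solution of the Hamilton--Jacobi--Bellman variational inequality attached to \eqref{eq:sup_value}, and then to verify that the piecewise function in \eqref{eq:value_function} is exactly that solution. Because the annuitization date $\tau$ is an optimal stopping time, \cref{thm:Dynamic_programming_principle} reduces the problem, under the constant effective discount $\rho$ and the autonomous dynamics \eqref{wealth_process}, to a stationary obstacle problem for a function of wealth alone: in the continuation region
\[
\rho V(x) = \sup_{(c,b,\pi)} \Big\{ u_1(c,b) + \big[rx + \pi(\mu-r) - c + w b\big] V'(x) + \tfrac{1}{2}\sigma^2 \pi^2 V''(x) \Big\},
\]
coupled with the obstacle $V(x) \ge G(x)$, where $G$ is the post-annuitization payoff \eqref{eq:post_retirement_value}. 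The stopping region $\{x : V(x) = G(x)\}$ is, by \cref{defn:optimal_retirement_wealth_threshold}, the half-line $\{x \ge x^*\}$, and the split of the continuation region $\{x < x^*\}$ into the two pieces $V_{int}$ and $V_{\overline{b}}$ records whether the labor control is interior (low wealth, $x<\tilde{x}$) or pinned at its upper bound $\overline{b}$ ($\tilde{x}\le x<x^*$), the switch occurring at the subsistence threshold $\tilde{x}$ of \cref{defn:subsistence_consumption_wealth_threshold}.

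The second step is to solve the HJB in each sub-region. The first-order conditions yield the feedback maps $\pi^* = -\theta V'(x)/(\sigma V''(x))$ and $c^* = I(V'(x))$ with $I = (\partial_c u_1)^{-1}$, while the labor first-order condition either has an interior root (region $x<\tilde{x}$) or is resolved at the boundary $b=\overline{b}$ (region $\tilde{x}\le x<x^*$). Substituting back produces a nonlinear second-order ODE in each sub-region. Exploiting the homothetic (CRRA) structure of $u_1$ in \eqref{3.2_main_appendix}, I would linearize each ODE by the Legendre/dual change of variable $z=V'(x)$ (equivalently, passing to the convex conjugate), turning it into a linear equidimensional equation whose indicial polynomial is precisely the quadratic $f(m)$ of \eqref{eq:quadratic_main_appendix}. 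Its roots $m_- < -1 < 0 < m_+$ supply the two homogeneous modes $x^{p_1}, x^{p_2}$ with $p_i=m_i+1$, and a CRRA particular solution (scaling with exponent $1-\gamma_1$ in the constrained region and $1-\gamma$ in the interior region, consistent with the two conversion factors $k,k_1$ of \eqref{eq:annuity_conversion_factor}) captures the consumption/labor optimization. Back-transforming gives the explicit closed forms for $V_{int}$ and $V_{\overline{b}}$.

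The third step fixes the free boundaries $\tilde{x}, x^*$ and the integration constants. Non-admissible homogeneous modes are discarded by requiring $V$ to be increasing and concave and by its behavior as $x$ approaches the solvency bound $-w\overline{b}/r$; the remaining unknowns are then determined by imposing continuity and continuous differentiability ($C^1$ value-matching and smooth-pasting) at $\tilde{x}$ between $V_{int}$ and $V_{\overline{b}}$, and at $x^*$ between $V_{\overline{b}}$ and $G$, supplemented at the free boundary $x^*$ by the high-contact ($C^2$) condition characteristic of optimal stopping. Once the candidate $W$ (the right-hand side of \eqref{eq:value_function}) is shown to be globally $C^1$, $C^2$ off the two boundaries, concave, and to satisfy the full variational inequality, a standard verification argument closes the proof: applying It\^o's formula to $e^{-\rho t}W(X_t)$ under an arbitrary admissible control and using the inequality and the supermartingale property gives $W(x)\ge J(x;c,b,\pi,\tau)$, with equality attained by the feedback controls together with $\tau^*=\inf\{t: X_t \ge x^*\}$.

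I expect the main obstacle to be twofold. First, proving that the transcendental matching system admits a \emph{unique} solution with the correct ordering $-w\overline{b}/r < \tilde{x} < x^*$, and that the resulting $W$ actually satisfies the variational inequality in the regions where it is not imposed by construction---in particular $W \ge G$ on $\{x<x^*\}$ and the Hamiltonian inequality $\rho G \ge \mathcal{H}G$ on $\{x\ge x^*\}$; establishing these sign conditions typically requires a careful monotonicity/convexity analysis of the pasted pieces. Second, discharging the integrability and transversality conditions needed for the verification (promoting the local martingale to a true martingale and controlling the behavior near the solvency boundary), and then reincorporating the age-dependent discount $\rho_t$ of \eqref{eq:age-dependent_force_of_mortality} by replacing the constant $\rho$ with the effective discount while checking that the homothetic form of the solution, and hence the three-region structure, is preserved.
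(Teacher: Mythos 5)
Your overall route coincides with the paper's own proof: both reduce \eqref{eq:sup_value} to the stationary HJB variational inequality \eqref{eq:hjb_variational_rho_appendix}, split the continuation region at the wealth level where the labor control hits $\bar{b}$, linearize the resulting nonlinear ODEs, and pin down constants and thresholds by pasting conditions. Your Legendre-type substitution $z=V'(x)$ is equivalent to the paper's parametrization by the inverse consumption function $x=X(c)$ (they differ by the monotone change of variable $c=I(V'(x))$), and both produce linear equidimensional equations whose indicial roots are the $m_\pm$ of \eqref{eq:quadratic_main_appendix}. Your final verification step (It\^o's formula plus a supermartingale argument) in fact goes beyond the paper, whose appendix only constructs the candidate and imposes the matching system without verifying optimality.

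There is, however, one concrete error: you have swapped the smoothness conditions at the two boundaries. You impose only $C^1$ pasting at $\tilde{x}$ and add a $C^2$ (``high-contact'') condition at the stopping boundary $x^*$. The correct allocation --- the one used in Appendix~\ref{Appendix_constants}, conditions \eqref{eq:V_tilde_match}--\eqref{eq:Vpp_tilde_match} at $\tilde{x}$ and \eqref{eq:V_bar_match}--\eqref{eq:Vp_bar_match} at $x^*$ --- is the reverse: full $C^2$ continuity holds at $\tilde{x}$, because that point is where the constraint $b\le\bar{b}$ starts to bind and the optimal control varies continuously across the switch, so the value function stays twice continuously differentiable there; while at $x^*$ only value matching and $C^1$ smooth fit hold, the $C^2$ (super-contact) condition being characteristic of singular control rather than optimal stopping. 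In this model the failure of $C^2$ at $x^*$ is not hypothetical: the optimal portfolio $\pi^*=-(\theta/\sigma)V'(x)/V''(x)$ is discontinuous at $x^*$ (see \eqref{Optimal_Investment_main_aligned} and \autoref{fig:optimal_investment}), which, given continuity of $V'$, is possible only if $V''$ jumps there. Consequently your five-equation system, though correctly counted, is misspecified: requiring $V''_{\bar{b}}(x^*)=G''(x^*)$ is incompatible with the true solution, and dropping the $C^2$ requirement at $\tilde{x}$ leaves the HJB inconsistent at the switch point, since the interior labor optimum would not reach $\bar{b}$ exactly as $x\uparrow\tilde{x}$. Your own verification step, carried out carefully, would reject the resulting candidate; reallocating the conditions as the paper does repairs the argument.
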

\begin{proof}[Proof of Theorem \ref{thm:value_function}] See Appendix~\ref{Proof_of_Theorem_1:_Value_Function}. 
\end{proof}

\begin{proposition}[Optimal retirement threshold]\label{prop:optimal_threshold}
Let $V_{\bar{b}}(x)$ and $G(x)$ be the value functions for an agent in the final working phase and the retired state, respectively, as defined in Theorem \ref{thm:value_function}. The optimal retirement wealth threshold, $x^*$, is determined by the \textit{smooth-pasting condition} at the boundary, which ensures that the marginal value of wealth is continuous, that is $\left(
\frac{d V_{\bar{b}}}{d x}(x^*) = \frac{d G}{d x}(x^*)\right)$. Then, the threshold $x^*$ is the unique solution to the following non-linear algebraic equation
\begin{align}\label{eq:threshold_condition}
&\frac{\frac{1}{2} \theta^2 (m_+ - m_-)}{\beta + \delta_t} B_1 k^{-\gamma_1 m_+} \left( \frac{b}{\bar{b}} \right)^{m_+ (\gamma_1 - \gamma)} (x^*)^{-\gamma_1 m_+} \nonumber \\
&= \left[ \frac{r - \frac{1}{2} \theta^2 m_-}{\beta + \delta_t} - \frac{1}{1 - \gamma_1} \left( 1 - \left( \frac{b}{\bar{b}} \right)^{-\frac{\gamma_1 - \gamma}{\gamma_1}} \right) \right] x^* + \frac{r - \frac{1}{2} \theta^2 m_-}{\beta + \delta_t} \frac{w (\bar{b} - b)}{r}
\end{align}
\end{proposition}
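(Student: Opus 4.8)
The plan is to treat $x^{*}$ as the free boundary separating the continuation (full-employment) region $\{\tilde{x}\le x<x^{*}\}$, where $V=V_{\overline{b}}$, from the stopping (annuitization) region $\{x\ge x^{*}\}$, where $V=G$. On the stopping region the variational inequality attached to the HJB equation holds with equality $V=G$, while on the continuation region $V_{\overline{b}}$ solves the HJB PDE; optimal-stopping theory then dictates that at $x^{*}$ the value matches ($C^{0}$) and the marginal value of wealth is continuous ($C^{1}$, the smooth-pasting condition stated in the proposition). I would therefore impose
\[
V_{\overline{b}}(x^{*})=G(x^{*}),\qquad V_{\overline{b}}'(x^{*})=G'(x^{*}),
\]
and reduce these two relations to the single algebraic equation \eqref{eq:threshold_condition}.

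First I would record the two ingredients explicitly. Differentiating \eqref{eq:post_retirement_value} (and absorbing the common exponential discount factor) gives the retirement side as a pure power, $G'(x)\propto \tfrac{k^{1-\gamma_1}}{\beta+\delta_t}\,x^{-\gamma_1}$, with the effective rate $\beta+\delta_t$ in place of $\rho$ in the age-dependent setting of Theorem~\ref{thm:value_function}. For $V_{\overline{b}}$ I would use the dual (Legendre) representation from the proof of Theorem~\ref{thm:value_function}, following \cite{koo2013optimal}: under the Cobb--Douglas/CRRA specification the marginal value of wealth $z=V_{\overline{b}}'(x)$ satisfies $z\propto\big(x+\tfrac{w\overline{b}}{r}\big)^{-\gamma_1}$ up to the labor factor $(b/\overline{b})^{\gamma_1-\gamma}$ and the conversion constant $k$, and in the dual variable the HJB collapses to a linear ODE whose indicial equation is exactly the quadratic \eqref{eq:quadratic_main_appendix} with roots $m_+>0$ and $m_-<-1$. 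Hence $V_{\overline{b}}$ is a particular solution (affine in the human-capital-adjusted wealth $x+\tfrac{w\overline{b}}{r}$, carrying the characteristic coefficient $\tfrac{r-\frac12\theta^2 m_-}{\beta+\delta_t}$) plus homogeneous terms $z^{m_{\pm}}$; translating $z^{m_+}$ back through $z\propto x^{-\gamma_1}$ produces precisely the factor $B_1 k^{-\gamma_1 m_+}(b/\overline{b})^{m_+(\gamma_1-\gamma)}x^{-\gamma_1 m_+}$ appearing on the left of \eqref{eq:threshold_condition}.

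With these forms in hand I would substitute into the value-matching and smooth-pasting equations and eliminate the spare homogeneous coefficient (the one attached to $m_-$, fixed by the lower boundary at $\tilde{x}$). This elimination generates the Wronskian-type prefactor $\tfrac{\frac12\theta^2(m_+-m_-)}{\beta+\delta_t}$, with the diffusion coefficient $\tfrac12\theta^2$ coming from the second-order term of the dual ODE and the factor $(m_+-m_-)$ from the determinant of the two power solutions; collecting the particular part yields the affine-in-$x^{*}$ right-hand side, including the constant $\tfrac{r-\frac12\theta^2 m_-}{\beta+\delta_t}\tfrac{w(\overline{b}-b)}{r}$ contributed by the human-capital term. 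This reproduces \eqref{eq:threshold_condition}. For uniqueness I would recast it as $F(x^{*})=0$ and argue strict monotonicity: the left-hand side is a strictly decreasing power of $x^{*}$ (exponent $-\gamma_1 m_+<0$), while the right-hand side is affine in $x^{*}$ with a nonnegative intercept $\propto w(\overline{b}-b)/r$; after checking the sign of the bracketed slope coefficient $\tfrac{r-\frac12\theta^2 m_-}{\beta+\delta_t}-\tfrac{1}{1-\gamma_1}\big(1-(b/\overline{b})^{-(\gamma_1-\gamma)/\gamma_1}\big)$ via the admissibility conditions (in particular $k_1>0$ from \eqref{eq:annuity_conversion_factor} and $m_-<-1$), $F$ is strictly monotone, so it has a unique root, which I would verify lies above $\tilde{x}$ for consistency.

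I expect the main obstacle to be twofold: (i) carrying the dual change of variables $z\propto x^{-\gamma_1}$ through the Cobb--Douglas labor normalization so that the homogeneous term emerges with the exact exponent $-\gamma_1 m_+$ and the precise constants $k^{-\gamma_1 m_+}(b/\overline{b})^{m_+(\gamma_1-\gamma)}$; and (ii) pinning down the sign of the bracketed slope coefficient, on which the strict monotonicity — and hence the uniqueness of the threshold — ultimately rests.
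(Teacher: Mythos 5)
Your proposal is correct in substance and relies on the same free-boundary machinery as the paper, but the execution is genuinely different in one respect worth comparing. The paper's proof (Appendix~\ref{app:proof_x_star}) evaluates the maximized HJB equation \eqref{eq:HJB_substituted_post_detail} at $x=x^*$ and substitutes not only $V_{\bar b}(x^*)=G(x^*)$ and $V'_{\bar b}(x^*)=G'(x^*)$ but also the second-order contact $V''_{\bar b}(x^*)=G''(x^*)$, then collapses the result using the explicit constants from the ODE solution. You instead work only with the two conditions that smooth-fit theory actually guarantees ($C^0$ value matching and $C^1$ smooth pasting), write $V_{\bar b}$ explicitly through the dual/power-solution representation with roots $m_\pm$ of \eqref{eq:quadratic_main_appendix}, and eliminate the homogeneous coefficient attached to $m_-$ between the two conditions. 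Your route is the more economical one, since it avoids invoking a $C^2$ contact at $x^*$ that is not among the proposition's hypotheses, and it mirrors what the paper effectively does anyway: in Appendix~\ref{Proof_of_Derivation_of_B_2_}, $B_2$ is solved from the $x^*$ conditions as a function of $x^*$, which is exactly your elimination step, leaving \eqref{eq:threshold_condition} in terms of $B_1$ and $x^*$ alone. Two corrections, however. First, your parenthetical claim that the $m_-$ coefficient is ``fixed by the lower boundary at $\tilde x$'' has the roles reversed: in the paper's scheme $B_1$ is determined by the conditions at $\tilde x$ (Appendix~\ref{Proof_of_Derivation_of_B_1_}) and $B_2$ by the conditions at $x^*$; this does not break your argument, since the equation you derive correctly retains $B_1$, but the bookkeeping matters when you later claim $B_1$ is known. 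Second, your uniqueness sketch (decreasing power versus affine function) goes beyond the paper, which asserts uniqueness without proof; as you yourself concede, it remains incomplete until the signs of $B_1$ and of the bracketed slope coefficient are pinned down, so as written it is a plausible program rather than a proof of uniqueness.
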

\begin{proof}[Proof of Proposition \ref{prop:optimal_threshold}] See Appendix~\ref{app:proof_x_star}.
\end{proof}

\begin{definition}\label{defn:threshold_x_tilde}
Let $V_{\text{int}}(x)$ and $V_{\bar{b}}(x)$ be the value functions for an agent in the final working phase and the retired state, respectively, as defined in Theorem~\ref{thm:value_function}. The boundary between these states is a unique wealth threshold, $\tilde{x}$, determined by the \textit{value-matching condition}. This condition requires the two functions to be equal at the boundary
\begin{equation}
 V_{\text{int}}(\tilde{x}) = V_{\bar{b}}(\tilde{x}) 
\end{equation}
\end{definition}

\begin{proposition}[Wealth threshold for the labor constraint]\label{prop:threshold_x_tilde}
Applying the value-matching condition from Definition~\ref{defn:threshold_x_tilde} yields the following system of equations, which implicitly defines the wealth threshold $\tilde{x}$ and the corresponding consumption level $\tilde{c}$:
\begin{align}
  \tilde{x} &= A_2 \tilde{c}^{-\gamma m_-} + \frac{1}{\alpha k} \tilde{c} - \frac{w \bar{b}}{r} \label{eq:xtilde_region1} 
  = B_1 \tilde{c}^{-\gamma_1 m_+} + B_2 \tilde{c}^{-\gamma_1 m_-} + \frac{1}{k_1} \tilde{c} - \frac{w (\bar{b} - b)}{r} %
\end{align}
\end{proposition}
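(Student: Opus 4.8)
The plan is to derive the displayed system directly from the parametric representations of $V_{\text{int}}$ and $V_{\bar b}$ established in the proof of Theorem~\ref{thm:value_function}, using the value-matching condition of Definition~\ref{defn:threshold_x_tilde} together with the $C^1$ (smooth-pasting) regularity of the value function at the free boundary. Recall that, in each regime, solving the HJB equation and inverting the first-order condition $u_{1,c}(c,b)=V'(x)$ expresses the optimal wealth as an explicit function of the optimal consumption rate $c$. The homogeneous part of the ODE contributes the power terms $c^{-\gamma m_\pm}$ in the interior regime and $c^{-\gamma_1 m_\pm}$ in the constrained regime, where $m_-<-1$ and $m_+>0$ are the roots of the quadratic $f(m)$ in \eqref{eq:quadratic_main_appendix}; the particular solution contributes the linear consumption terms $\tfrac{1}{\alpha k}c$, $\tfrac{1}{k_1}c$ and the income offsets $-\tfrac{w\bar b}{r}$, $-\tfrac{w(\bar b-b)}{r}$.

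First I would fix which homogeneous solutions survive in each region by imposing the admissible boundary behavior. In the low-wealth regime $x<\tilde x$ (the region $V_{\text{int}}$), the requirement that wealth approach the solvency floor $-w\bar b/r$ as consumption vanishes eliminates the $c^{-\gamma m_+}$ mode, leaving the single coefficient $A_2$ on $c^{-\gamma m_-}$. In the intermediate regime $\tilde x\le x<x^*$ (the region $V_{\bar b}$), which is bounded on both sides, both homogeneous modes persist, giving the two coefficients $B_1$ and $B_2$. This produces the two wealth--consumption maps
\[
X_{\text{int}}(c)=A_2\,c^{-\gamma m_-}+\tfrac{1}{\alpha k}c-\tfrac{w\bar b}{r},
\qquad
X_{\bar b}(c)=B_1\,c^{-\gamma_1 m_+}+B_2\,c^{-\gamma_1 m_-}+\tfrac{1}{k_1}c-\tfrac{w(\bar b-b)}{r}.
\]

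Next I would locate the boundary. By Definition~\ref{defn:subsistence_consumption_wealth_threshold}, the transition occurs exactly where the labor control attains its upper bound $\bar b$; at that point the optimal consumption takes the common value $\tilde c$. Value matching $V_{\text{int}}(\tilde x)=V_{\bar b}(\tilde x)$ together with smooth pasting $V_{\text{int}}'(\tilde x)=V_{\bar b}'(\tilde x)$ --- standard for this class of free-boundary problems --- translates, through the first-order condition $V'(\tilde x)=u_{1,c}(\tilde c,\bar b)$, into continuity of the optimal consumption across the boundary. Since wealth is also continuous and equals $\tilde x$, evaluating both feedback maps at $c=\tilde c$ gives $\tilde x=X_{\text{int}}(\tilde c)=X_{\bar b}(\tilde c)$, which is precisely the chain of equalities \eqref{eq:xtilde_region1}.

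Finally I would establish that this system has a unique solution $(\tilde x,\tilde c)$. Writing $\Phi(c):=X_{\text{int}}(c)-X_{\bar b}(c)$, I would show that $\Phi$ is strictly monotone on the admissible consumption range, using $m_-<-1<0<m_+$ and the signs of $A_2,B_1,B_2$ inherited from the boundary conditions, so that $\Phi$ has at most one zero; a sign change then yields a unique $\tilde c$, and $\tilde x=X_{\text{int}}(\tilde c)$ follows. The main obstacle is the rigorous justification of consumption continuity at the free boundary --- that is, establishing the $C^1$ smooth-pasting regularity and correctly identifying which homogeneous modes are admissible in each region --- since the uniqueness argument hinges on the resulting sign pattern of the coefficients; the remaining steps are algebraic substitutions.
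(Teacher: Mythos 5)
Your proposal is correct and follows essentially the same route as the paper's own proof: both express each regime's solution as an inverse wealth--consumption map $X_{\text{int}}(c)$, $X_{\bar b}(c)$ obtained from the HJB/ODE analysis in Theorem~\ref{thm:value_function}, identify the boundary consumption $\tilde c$ where the labor constraint binds, and read off $\tilde x = X_{\text{int}}(\tilde c)=X_{\bar b}(\tilde c)$ from the matching conditions. If anything, your write-up is slightly more careful than the paper's (which leans on value matching alone), since you explicitly justify consumption continuity via smooth pasting and the first-order condition, and you sketch a uniqueness argument that the paper asserts in Definition~\ref{defn:threshold_x_tilde} but never proves.
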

\begin{proof}[Proof of Proposition~\ref{prop:threshold_x_tilde}] See Appendix~\ref{Proof_of_Derivation_of_tilde{c}_}.
\end{proof}

\subsubsection{Hamilton-Jacobi-Bellman Variational Equality}
\noindent We now employ a dynamic programming approach within the Hamilton-Jacobi-Bellman (HJB) framework to determine an optimal $(\pi^*, c^*, b^*, \tau^*) \in \mathcal{A}$ that satisfies \eqref{eqn:V_x}. Consider the value function at time $t$ with wealth $x$, 

\begin{equation}
J(t, x) = \sup_{(c, b, \pi, \tau) \in \mathcal{A}} \mathbb{E} \left[ \int_t^{\tau} e^{-\rho t} u_1(c_t, b_t) \, dt + \frac{e^{-\rho \tau}}{\rho} u_2(kX_{\tau}) \right]. \label{eq:dynamics}
\end{equation}
such that,
\begin{align*}
    dX_t &= \left[rX_t + \pi_t(\mu - r) - c_t + wb_t\right]dt + \sigma \pi_t \, dW_t,
\end{align*}
and 
\begin{equation}\label{eq:labor_income_rate_2}
l_{t} = \begin{cases} 
0 \leq l_t \leq L & \text{if } 0\leq t < \tau \quad  \\ 
\overline{L} & \text{if } t \geq \tau \quad 
\end{cases}
\end{equation}
where
\[
J(\tau, x) = \frac{e^{-\rho \tau}}{\rho} u_2(kX_{\tau}),
\]

\noindent is the terminal condition at time \( \tau \). Let $V(x)$ be the value function, $V(x) = J(x)$. Let 
$G(x) = \frac{e^{-\rho \tau}}{\rho} u_2(kX_{\tau})$ %
represent the value obtained upon fully retiring at wealth $x$ that \( X_t = x \). Then the value function $V(x)$ satisfies the HJB variational equality %
\begin{equation} \label{eq:hjb_variational_rho_appendix}
\max \left\{ \sup_{c\ge0, 0 \leq b \leq \bar{b}, \pi} \left[ u_1(c, b) + \mathcal{L}^{c, b, \pi, \tau} V(x) \right] - \rho_{t}\, V(x), \quad G(x) - V(x) \right\} = 0,
\end{equation}
where the generator $\mathcal{L}^{c, b, \pi, \tau} V(x)$ is
\[
\mathcal{L}^{c, b, \pi, \tau} V(x) = \left( r x + \pi (\mu - r) - c + w b \right) V'(x) + \frac{1}{2} \sigma^2 \pi^2 V''(x).
\]
The state space is divided into a continuation (working) region ($x < {x^*}$), where $V(x) > G(x)$, and the HJB equation holds
    \begin{equation}\label{eq:HJB_cont_region_rho_appendix}
    \rho_{t} V(x) = \sup_{c\ge0, 0 \leq b \leq \bar{b}, \pi} \left[ u_1(c, b) + \left( r x + \pi (\mu - r) - c + w b \right) V'(x) + \frac{1}{2} \sigma^2 \pi^2 V''(x) \right],
    \end{equation}
and 
a stopping (retirement) region ($x \ge x^*$, where $V(x) = G(x)$. The value ${x^*}$ is the optimal retirement wealth threshold.

\begin{remark}[Structure of the optimal strategy]\label{rem:optimal_strategy_structure}
The HJB variational inequality in \eqref{eq:hjb_variational_rho_appendix} partitions the state space into two distinct regions, separated by the optimal retirement wealth threshold $x^*$.

\begin{enumerate}
    \item \textit{Continuation Region ($x < x^*$):} For wealth levels below the threshold, it is optimal for the agent to continue working. In this region, the value function $V(x)$ satisfies the HJB equation, as detailed in Proposition \ref{prop:optimal_threshold}:
    \[ 
    \rho_{t} V(x) = \sup_{c, b, \pi} \left[ u_1(c, b) + \mathcal{L}^{c, b, \pi} V(x) \right] 
    \]

    \item \textit{Stopping (Retirement) Region ($x \ge x^*$):} Once the agent's wealth reaches or exceeds the threshold, it is optimal to stop working and retire. The value function is then equal to the retirement value function, $G(x)$:
    \[ 
    V(x) = G(x) 
    \]
    where $G(x) = \frac{e^{-\rho \tau}}{\rho} u_2(kX_{\tau})$ represents the total utility derived from retiring with wealth $x$.
\end{enumerate}
\end{remark}

\begin{theorem}[Optimal policies in the continuation region]\label{thm:optimal_policies}
Let $V(x)$ be a twice-continuously differentiable solution to the Hamilton-Jacobi-Bellman (HJB) equation \eqref{eq:hjb_variational_rho_appendix} in the continuation region, where $x < x^*$. Then the admissible optimal policies for portfolio allocation, $\pi^*(x)$, consumption, $c^*(x)$, and labor, $b^*(x)$, are given by the following first-order conditions

\begin{enumerate}
    \item \textit{Optimal portfolio:} The optimal fraction of wealth invested in the risky asset follows the Merton rule
    \begin{equation}
        \pi^*(x) = -\frac{\theta}{\sigma} \frac{V'(x)}{V''(x)}. \label{eq:foc_pi_main_appendix}
    \end{equation}

    \item \textit{Optimal consumption:}\label{Optimal_Consumption} The optimal consumption rate $c^*$ is determined by the condition that the marginal utility of consumption equals the marginal value of wealth
    \begin{equation}\label{eq:optimal_consumption_policy_bmn}
        \frac{\partial u_1(c^*, b^*)}{\partial c} = V'(x).
    \end{equation}

    \item \textit{Optimal labor:}\label{Optimal_Labor} For an interior solution where $0 < b^* < \bar{b}$, the optimal labor supply $b^*$ satisfies the condition that the marginal disutility of labor equals its marginal contribution to the value of wealth
    \begin{equation}
        \frac{\partial u_1(c^*, b^*)}{\partial b} = -w V'(x). \label{eq:foc_b_main_appendix}
    \end{equation}
    \noindent If condition \eqref{Optimal_Labor} is not met for any $b \in (0, \bar{b})$, the optimum is a corner solution at the boundary, $b^* = \bar{b}$.

    \item \textit{Marginal rate of substitution:}\label{marginal_rate_of_substitution} Combining the conditions for Optimal consumption and labor in \eqref{Optimal_Consumption} and  \eqref{Optimal_Labor}, respectively, for an interior solution yields the classic static optimality condition, where the marginal rate of substitution between consumption and leisure equals the wage rate
\begin{equation}
    \frac{\partial u_1(c^*, b^*) / \partial b}{\partial u_1(c^*, b^*) / \partial c} = -w. \label{eq:MRS_condition_main_appendix}
\end{equation}
\end{enumerate}
\end{theorem}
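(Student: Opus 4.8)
The plan is to treat the supremum appearing on the right-hand side of the continuation-region HJB equation \eqref{eq:HJB_cont_region_rho_appendix} as a static, pointwise optimization problem in the three control arguments $(c,b,\pi)$, with $x$ (and hence the coefficients $V'(x)$ and $V''(x)$) held fixed. Writing the Hamiltonian as
\begin{equation*}
\Phi(c,b,\pi) := u_1(c,b) + \bigl(rx + \pi(\mu - r) - c + wb\bigr)V'(x) + \tfrac{1}{2}\sigma^2 \pi^2 V''(x),
\end{equation*}
I would first observe that $\Phi$ is separable in the sense that $\pi$ enters only through the last two (portfolio) terms, while $c$ and $b$ enter through $u_1$ and the linear drift terms. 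The optimization therefore splits into an unconstrained maximization over $\pi \in \mathbb{R}$ and a constrained maximization over $(c,b)$ with $c \ge 0$ and $0 \le b \le \bar{b}$.

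First I would dispose of the portfolio choice. As a function of $\pi$, $\Phi$ is a quadratic with leading coefficient $\tfrac{1}{2}\sigma^2 V''(x)$; since $V$ is concave (so that $V''(x) < 0$, as recorded following equation \eqref{eqn:V_x} and in Section~8 of \cite{karatzas2000utility}), this quadratic is strictly concave and admits a unique interior maximizer characterized by the stationarity condition $\partial \Phi / \partial \pi = (\mu - r)V'(x) + \sigma^2 \pi V''(x) = 0$. Solving and substituting the market price of risk $\theta = (\mu - r)/\sigma$ yields the Merton rule \eqref{eq:foc_pi_main_appendix}. Next, for the consumption--labor pair, since $V'(x) > 0$ (as $V$ is increasing) and the Cobb--Douglas utility $u_1$ of \eqref{3.2_main_appendix} is jointly strictly concave in $(c,b)$ for the admissible parameter range, the map $(c,b) \mapsto u_1(c,b) - cV'(x) + wbV'(x)$ is concave on its domain, so a stationary point satisfying the constraints is the global maximizer. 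Differentiating in $c$ gives $\partial u_1/\partial c = V'(x)$, which is \eqref{eq:optimal_consumption_policy_bmn}; the Inada behaviour of $u_1$ as $c \to 0^+$ guarantees that this stationary point is interior, so the constraint $c \ge 0$ never binds. Differentiating in $b$ gives the labor condition $\partial u_1/\partial b = -wV'(x)$, equation \eqref{eq:foc_b_main_appendix}, which I would impose via the Karush--Kuhn--Tucker conditions for the box constraint $0 \le b \le \bar{b}$: if the stationary point lies in the open interval $(0,\bar{b})$ we obtain the stated interior solution, whereas if the unconstrained root exceeds $\bar{b}$ the upper constraint is active and the maximizer is the corner $b^* = \bar{b}$. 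Finally, dividing the labor condition \eqref{eq:foc_b_main_appendix} by the consumption condition \eqref{eq:optimal_consumption_policy_bmn} eliminates $V'(x)$ and delivers the marginal-rate-of-substitution identity \eqref{eq:MRS_condition_main_appendix}.

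The routine part is the differentiation itself; the genuine content lies in justifying that these stationary points are maxima rather than merely critical points, which is where I expect the main obstacle to be. For the portfolio term this hinges entirely on the strict concavity $V''(x) < 0$, and for the consumption--labor pair on the joint concavity of $u_1$ together with $V'(x) > 0$; both must be invoked explicitly, and in a fully rigorous treatment one would want these sign conditions to hold throughout the continuation region rather than be assumed. The secondary subtlety is the careful Karush--Kuhn--Tucker bookkeeping for the labor constraint, since the theorem asserts the interior characterization only on $\{0 < b^* < \bar{b}\}$ and otherwise a corner at $b^* = \bar{b}$; verifying that the lower corner $b^* = 0$ is excluded (again using the Inada property of $u_1$ in the labor argument) completes the argument.
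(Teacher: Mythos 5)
Your proposal is correct as a derivation of the stated first-order conditions, but it is not the route the paper's appendix takes. You treat the supremum in the HJB equation \eqref{eq:HJB_cont_region_rho_appendix} as a static pointwise maximization of the Hamiltonian $\Phi(c,b,\pi)$, exploit its separability in $\pi$ versus $(c,b)$, and justify that the stationary points are genuine maximizers: strict concavity of the quadratic in $\pi$ via $V''(x)<0$, joint concavity of $u_1$ together with $V'(x)>0$ for the consumption--labor pair, and Karush--Kuhn--Tucker bookkeeping for the box constraint $0 \le b \le \bar b$. The paper's proof in Appendix~\ref{Proof_of_Theorem_3_Optimal_Policies_main_final} does essentially the opposite: it \emph{takes the FOC formulas as given} and spends its effort evaluating them region by region against the piecewise value function of Theorem~\ref{thm:value_function} (computing, e.g., $\pi^* = \frac{\theta}{\sigma\gamma_1}x$ explicitly in the retirement region where $V=G$, inverting $V'(x)=K_v (c^*)^{-\gamma}$ etc.), and adds the hitting-time characterization of $\tau^*$ — material that really substantiates Theorem~\ref{thm:optimal_policies_results} rather than the FOCs themselves. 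So your argument supplies the verification step (stationarity $\Rightarrow$ optimality) that the paper leaves implicit, while the paper supplies the explicit regional policies that you do not attempt; the two are complementary. One caveat you should flag: your appeal to joint concavity and Inada behaviour of $u_1$ presumes labor enters as a disutility, $\partial u_1/\partial b < 0$, which is what the interior condition $\partial u_1/\partial b = -wV'(x) < 0$ requires. For the Cobb--Douglas form \eqref{3.2_main_appendix} as literally written in the paper, $u_1$ is \emph{increasing} in $b$ (the paper itself runs into this, producing relations such as $b^* = -\frac{1-\alpha}{w\alpha}c^*$ and a constant $K_v$ with a negative base), in which case $\Phi$ is increasing in $b$ and the corner $b^*=\bar b$ always obtains, with no interior solution possible. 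Your proof is valid for the class of utilities the theorem statement implicitly assumes; it would be worth stating that assumption ($\partial u_1/\partial b<0$ on the relevant domain) explicitly rather than leaving it buried in the concavity/Inada appeal.
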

\begin{proof}[Proof of Theorem \ref{thm:optimal_policies}] See Appendix~\ref{Proof_of_Theorem_3_Optimal_Policies_main_final}.
\end{proof}

\begin{corollary}[Actuarially fair annuity rate]\label{cor:actuarially_fair_annuity_rate}
The market-consistent annuity rate, denoted $k_{annuity}^*$ %
can be given in an actuarially fair framework by
\begin{equation}
k_{annuity}^* = \frac{y_{}}{\ddot{a}(\beta,\delta_{y})},
\end{equation}
where,
\begin{equation}
\ddot{a}(\beta,\delta_{y}) = \int_0^\infty e^{-\beta s} \, { \px[s]{y}[(\delta)] } \, ds,
\end{equation}
and $\beta > 0$ is a subjective discount rate that reflects the retiree’s time preference for money, $\px[t]{y}[(\delta)]$ is the survival probability defined in equation \eqref{eq:age_dependent_mortality} for an individual currently aged $y_{}$ to survive $s$ more years, and $\delta_{y}$ represents the age-dependent force of mortality for that individual, as defined in equation \eqref{eq:Gompertz}.
\end{corollary}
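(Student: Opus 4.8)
The plan is to establish the result from the actuarial \emph{principle of equivalence}, which characterizes an actuarially fair contract by requiring that the single premium equal the expected present value (EPV) of the benefit stream. Under the deterministic age-dependent mortality of \eqref{eq:Gompertz}, I would model the annuity as a contract that pays continuously at the constant rate $k_{annuity}^*$ per unit time for as long as an annuitant currently aged $y$ survives, with payments discounted at the subjective rate $\beta$. Writing $T_y$ for the remaining lifetime, whose survival function is $P(T_y > s) = {\px[s]{y}[(\delta)]}$ by \eqref{eq:age_dependent_mortality}, the EPV of the benefits is
\[
\mathbb{E}\left[\int_0^{T_y} k_{annuity}^* \, e^{-\beta s}\, ds\right],
\]
and the premium paid is the lump sum $y$ being annuitized.

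The key step is to convert this expectation of a random-horizon integral into a deterministic integral against the survival function. Since the integrand $k_{annuity}^*\, e^{-\beta s}$ is nonnegative, Tonelli's theorem permits interchanging the expectation with the time integral; equivalently, I would invoke the tail identity $\mathbb{E}[\int_0^{T_y} g(s)\,ds] = \int_0^\infty g(s)\,P(T_y > s)\,ds$ with $g(s) = k_{annuity}^*\, e^{-\beta s}$, yielding
\[
\mathbb{E}\left[\int_0^{T_y} k_{annuity}^* \, e^{-\beta s}\, ds\right] = k_{annuity}^* \int_0^\infty e^{-\beta s}\,{\px[s]{y}[(\delta)]}\, ds = k_{annuity}^*\, \ddot{a}(\beta,\delta_y),
\]
where the last equality is exactly the definition of the annuity factor. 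Equating the premium to this EPV and solving then delivers the claim: setting $y = k_{annuity}^*\, \ddot{a}(\beta,\delta_y)$ and dividing by $\ddot{a}(\beta,\delta_y)$ gives $k_{annuity}^* = y/\ddot{a}(\beta,\delta_y)$.

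To legitimize the division I would first verify that $0 < \ddot{a}(\beta,\delta_y) < \infty$. Finiteness follows because the Gompertz force of mortality $\delta_t$ in \eqref{eq:Gompertz} grows exponentially in age, so ${\px[s]{y}[(\delta)]}$ decays faster than any exponential and the integral converges (indeed even for $\beta = 0$, where it reduces to the finite complete expectation of life); strict positivity is immediate since the integrand is positive on a set of positive measure.

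I expect the only genuine subtlety to be the measure-theoretic justification of the expectation–integral interchange together with the precise statement of the equivalence principle, namely the conventions that the premium is the lump sum $y$ and that the benefit is a level continuous life annuity. Everything else is routine, and no regularity beyond the finiteness of $\ddot{a}(\beta,\delta_y)$ is required.
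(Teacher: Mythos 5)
Your proposal is correct, and it supplies exactly the argument the paper leaves implicit: the paper states this corollary without any proof, treating the actuarially fair rate as a standard actuarial fact, and your equivalence-principle derivation (premium equals expected present value of the continuous life annuity, with the Tonelli/tail-identity step and the finiteness check on $\ddot{a}(\beta,\delta_y)$ under Gompertz mortality) is precisely the canonical justification. The only point worth flagging is notational rather than mathematical: the paper uses $y$ both for the annuitant's age and, in the numerator of $k_{annuity}^{*} = y/\ddot{a}(\beta,\delta_{y})$, for the lump sum being annuitized; your reading of the numerator as the premium is the only one under which the formula is dimensionally sensible and consistent with the paper's earlier convention that a fund $x$ converts to income $kx$.
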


\begin{theorem}[Optimal policies]\label{thm:optimal_policies_results}
Assume the value function $V(x)$ from Theorem~\ref{thm:value_function} exists and satisfies the regularity conditions of Definition~\eqref{defn:admissible}, then the optimal strategies are given by
\begin{align}
    c_t^* &=
    \begin{cases}
        \left( \dfrac{V'_{\mathrm{int}}(x)}{\left(-\dfrac{1-\alpha}{w\alpha}\right)^{(1-\alpha)(1-\gamma)}} \right)^{-1/\gamma}, & \text{if } x < \tilde{x}, \\[1ex]
        \left( \dfrac{V'_{\bar{b}}(x)}{(\bar{b}^{1-\alpha})^{1-\gamma}} \right)^{-1/\gamma_1}, & \text{if } \tilde{x} \leq x < x^*, \\[1ex]
        \rho_{t}^{1/\gamma_1} k^{(\gamma_1 - 1)/\gamma_1} x, & \text{if } x \geq x^*,
    \end{cases}
    \label{eq:optimal_consumption_policy_aligned}
    \end{align}
\begin{align}
    b_t^* &=
    \begin{cases}
        \left( \dfrac{1 - \alpha}{w\alpha} \right) \left( \dfrac{k^{1 - \gamma_1}}{\beta + \delta_t} \right)^{-1/\gamma_1} x, & \text{if } x < \tilde{x}, \\[1ex]
        \bar{b}, & \text{if } \tilde{x} \leq x < x^*, \\[1ex]
        0, & \text{if } x \geq x^*,
    \end{cases}
    \label{eq:Optimal_Labor_Supply_main} \\[2ex]
    \pi_t^* &=
    \begin{cases}
        -\dfrac{\theta}{\sigma} \dfrac{V'_{\mathrm{int}}(x)}{V''_{\mathrm{int}}(x)}, & \text{if } x < \tilde{x}, \\[1ex]
        -\dfrac{\theta}{\sigma} \dfrac{V'_{\bar{b}}(x)}{V''_{\bar{b}}(x)}, & \text{if } \tilde{x} \leq x < x^*, \\[1ex]
        \dfrac{\theta}{\sigma \gamma_1} x, & \text{if } x \geq x^*,
    \end{cases}
    \label{Optimal_Investment_main_aligned} \\[2ex]
    \tau^* &= \inf\{ t \geq 0 : x \geq x^* \}.
    \label{Optimal_Stopping_main_aligned}
\end{align}
\noindent where $\delta_t$ represents the age-dependent force of mortality at age $t$ defined in equation \eqref{eq:Gompertz}, the functions $V'_{int}(x)$, $V'_{\bar{b}}(x)$, $V''_{int}(x)$ and $V''_{\bar{b}}(x)$ represent the first and second derivatives of the value function in the respective regions in Theorem \ref{thm:value_function}.
\end{theorem}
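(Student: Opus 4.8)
The plan is to recognize that this theorem is, in essence, the explicit specialization of the abstract first-order conditions of Theorem~\ref{thm:optimal_policies} to the Cobb--Douglas felicity $u_1(c,b)=\frac{c^{1-\gamma_1}b^{\gamma_1-\gamma}}{1-\gamma_1}$, combined with the piecewise value function of Theorem~\ref{thm:value_function}. Accordingly I would prove it region by region, in each case inverting the marginal-utility relations and then inserting the closed form of $V'$ (and $V''$) for that region. The only genuinely new ingredient beyond Theorems~\ref{thm:value_function} and~\ref{thm:optimal_policies} is the passage from the constant effective discount rate $\rho$ to the age-dependent rate $\rho_t=\int_0^t(\beta+\delta_y)\,dy$, so I would carry the substitution $\rho\mapsto\beta+\delta_t$ explicitly wherever $\rho$ enters the candidate formulas.

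First I would compute the two marginal utilities, $\partial_c u_1 = c^{-\gamma_1}b^{\gamma_1-\gamma}$ and $\partial_b u_1 = \frac{\gamma_1-\gamma}{1-\gamma_1}c^{1-\gamma_1}b^{\gamma_1-\gamma-1}$, and record the identities $1-\gamma_1=\alpha(1-\gamma)$ and $\gamma_1-\gamma=(1-\alpha)(1-\gamma)$, so that $\frac{\gamma_1-\gamma}{1-\gamma_1}=\frac{1-\alpha}{\alpha}$ and $(\bar b^{1-\alpha})^{1-\gamma}=\bar b^{\gamma_1-\gamma}$. In the interior subregion $x<\tilde x$, I would take the ratio of the labor and consumption conditions \eqref{eq:foc_b_main_appendix} and \eqref{eq:optimal_consumption_policy_bmn} to recover the marginal-rate-of-substitution relation \eqref{eq:MRS_condition_main_appendix}, which pins down the proportionality between $c^*$ and $b^*$ through the factor $\frac{1-\alpha}{w\alpha}$; substituting this back into the consumption condition and then inserting $V'_{\mathrm{int}}$ from Theorem~\ref{thm:value_function} gives the stated expressions for $c_t^*$ and $b_t^*$, with the factor $(k^{1-\gamma_1}/(\beta+\delta_t))^{-1/\gamma_1}$ emerging from the effective-discount-rate normalization of $V'_{\mathrm{int}}$. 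In the corner subregion $\tilde x\le x<x^*$ the constraint binds at $b^*=\bar b$ (by Definition~\ref{defn:subsistence_consumption_wealth_threshold} and Proposition~\ref{prop:threshold_x_tilde}), so I would simply set $b=\bar b$ in \eqref{eq:optimal_consumption_policy_bmn} and solve $c^{-\gamma_1}\bar b^{\gamma_1-\gamma}=V'_{\bar b}(x)$ for $c^*$.

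For the retirement region $x\ge x^*$ I would use $V=G$ with the explicit power form \eqref{eq:post_retirement_value}: differentiating gives $G'(x)\propto x^{-\gamma_1}$ and $G''(x)\propto x^{-\gamma_1-1}$, so the Merton portfolio rule \eqref{eq:foc_pi_main_appendix} collapses to $\pi^*=\frac{\theta}{\sigma\gamma_1}x$, while the power-utility version of \eqref{eq:optimal_consumption_policy_bmn}, $c^{-\gamma_1}=G'(x)$, yields $c^*=\rho_t^{1/\gamma_1}k^{(\gamma_1-1)/\gamma_1}x$ after the $\rho\mapsto\rho_t$ substitution; labor vanishes, $b^*=0$, because the leisure constraint \eqref{eq:leisure_constraint} forces $l=\bar L$ post-retirement. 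Finally, the optimal stopping rule $\tau^*=\inf\{t\ge0:X_t\ge x^*\}$ follows from the variational structure of \eqref{eq:hjb_variational_rho_appendix}: on $\{x<x^*\}$ the continuation branch is active and retirement is suboptimal, whereas on $\{x\ge x^*\}$ one has $V=G$, so the first entry time into the stopping region is optimal.

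The main obstacle I anticipate is not the algebra of inverting the Cobb--Douglas marginals but (i) the bookkeeping needed to keep the interior-region formulas consistent once $\rho$ is replaced by the time-varying $\beta+\delta_t$ --- in particular verifying that the substituted $V'_{\mathrm{int}}$ still solves the HJB equation \eqref{eq:HJB_cont_region_rho_appendix} and that the resulting controls remain admissible in the sense of Remark~\ref{rem:admissibility} --- and (ii) confirming that the candidate is genuinely optimal rather than merely critical. The latter requires a verification argument: showing the piecewise-defined $V$ is $C^1$ across $\tilde x$ and $x^*$ (value-matching of Definition~\ref{defn:threshold_x_tilde} and smooth-pasting of Proposition~\ref{prop:optimal_threshold}), that it dominates $G$ on the continuation region, and that it satisfies the full variational inequality, so that the Dynamic Programming Principle of Theorem~\ref{thm:Dynamic_programming_principle} identifies the constructed hitting time and feedback controls as optimal. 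I would also take care with the sign conventions in the labor condition, since the Cobb--Douglas marginal $\partial_b u_1$ and the budget term $wV'$ must be reconciled when extracting the factor $\frac{1-\alpha}{w\alpha}$.
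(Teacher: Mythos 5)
Your proposal follows essentially the same route as the paper's own proof (Appendix~\ref{Proof_of_Theorem_3_Optimal_Policies_main_final}): a region-by-region application of the first-order conditions of Theorem~\ref{thm:optimal_policies} to the piecewise value function of Theorem~\ref{thm:value_function}, inversion of the Cobb--Douglas marginal utilities with the same constants $\left(-\tfrac{1-\alpha}{w\alpha}\right)^{(1-\alpha)(1-\gamma)}$ and $(\bar b^{1-\alpha})^{1-\gamma}$, the explicit power form of $G$ on $x\ge x^*$ collapsing the portfolio rule to $\tfrac{\theta}{\sigma\gamma_1}x$ and consumption to $\rho_t^{1/\gamma_1}k^{(\gamma_1-1)/\gamma_1}x$, and $\tau^*$ characterized as the first hitting time of $x^*$ via value-matching and smooth-pasting. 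Your additional verification step (checking $C^1$ pasting, domination of $G$ on the continuation region, and admissibility so that the DPP certifies optimality rather than mere criticality) is sound and in fact goes beyond what the paper records, which stops at the first-order conditions.
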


\begin{corollary}[Optimal Annuity Payment Rate]\label{cor:optimal_annuity_payment_rate}
Given optimal policies as in Theorem \ref{thm:optimal_policies}, with the optimal retirement wealth threshold $x^*$, the \textit{optimal annuity payment rate} $k_t^*$ is
\begin{equation}
k_t^* =
\begin{cases}
0, & \text{if } -\frac{w\bar{b}}{r} < x < x^* \quad \text{(working period)}, \\[6pt]
\phi x, & \text{if } x \geq x^* \quad \text{(full retirement period)},
\end{cases}
\end{equation}
where $\phi$ is the endogenous withdrawal rate and satisfies
\begin{equation}
\phi = \left( \frac{c_t^*}{\rho^{\frac{1}{\gamma_1}}x}\right)^{\frac{\gamma_1}{\gamma_1 -1}}\bigg|_{x \geq x^*} = k,
\end{equation}
with $c_t^*$ given by the optimal consumption policy in~\eqref{eq:optimal_consumption_policy_aligned}.
\end{corollary}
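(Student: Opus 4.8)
The plan is to treat this as an essentially immediate consequence of the two-region structure established by the HJB variational inequality \eqref{eq:hjb_variational_rho_appendix}, the form of the optimal stopping time $\tau^*$ in Theorem~\ref{thm:optimal_policies_results}, and the explicit retirement-region consumption policy from the same theorem. No new analysis is required beyond matching definitions and carrying out one exponent calculation. I would organize the argument region by region, exactly mirroring the case split in the statement.

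First I would dispose of the working period $-\tfrac{w\bar b}{r} < x < x^*$. By \eqref{Optimal_Stopping_main_aligned} the optimal retirement time is $\tau^* = \inf\{t \ge 0 : x \ge x^*\}$, so on the event $\{x < x^*\}$ the agent has not yet reached the stopping region and, by the irreversibility of annuitization built into the model, has not purchased the annuity. Since the annuity is the only source of the payment stream $kX_\tau$, no annuity income accrues before $\tau^*$, which yields $k_t^* = 0$ on this region directly.

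For the retirement region $x \ge x^*$, the annuitization mechanism of the model converts a fund of size $x$ into a life annuity paying $kx$ (as specified in the introduction of $u_2(kX_\tau)$), so the payment rate is linear in wealth, $k_t^* = \phi x$, and the remaining task is to identify the endogenous withdrawal rate $\phi$ with $k$. This is the one genuine computation: substituting the retirement-region policy $c_t^* = \rho_t^{1/\gamma_1} k^{(\gamma_1-1)/\gamma_1} x$ from \eqref{eq:optimal_consumption_policy_aligned} into the defining ratio gives $c_t^*/(\rho^{1/\gamma_1} x) = k^{(\gamma_1-1)/\gamma_1}$, and raising this to the power $\gamma_1/(\gamma_1-1)$ collapses the reciprocal exponents to give $\bigl(k^{(\gamma_1-1)/\gamma_1}\bigr)^{\gamma_1/(\gamma_1-1)} = k$, so $\phi = k$ as claimed.

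The only point requiring care — and thus the nearest thing to an obstacle, though a minor one — is the exponent bookkeeping, since $\gamma_1 - 1$ may be positive or negative because $\gamma_1 = 1 - \alpha(1-\gamma)$ can lie on either side of $1$. I would note that the reciprocal cancellation $\bigl(k^{(\gamma_1-1)/\gamma_1}\bigr)^{\gamma_1/(\gamma_1-1)} = k$ is valid irrespective of the sign of $\gamma_1 - 1$ as long as the base $k$ is positive, which is guaranteed by the positivity condition \eqref{eq:annuity_conversion_factor}. Hence $\phi = k > 0$ is well-defined and the two-case formula for $k_t^*$ follows, completing the proof.
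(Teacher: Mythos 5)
Your proof is correct and takes exactly the route the paper intends: the paper states this corollary without any explicit proof, treating it as an immediate consequence of Theorem \ref{thm:optimal_policies_results}, and your argument — zero payment before the stopping time $\tau^* = \inf\{t \geq 0 : x \geq x^*\}$, then substitution of the retirement-region policy $c_t^* = \rho_t^{1/\gamma_1} k^{(\gamma_1-1)/\gamma_1} x$ to get $c_t^*/(\rho^{1/\gamma_1}x) = k^{(\gamma_1-1)/\gamma_1}$ and the cancellation $\bigl(k^{(\gamma_1-1)/\gamma_1}\bigr)^{\gamma_1/(\gamma_1-1)} = k$ — is precisely the verification the paper leaves implicit. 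Your added observation that this cancellation is legitimate regardless of the sign of $\gamma_1 - 1$ because $k > 0$ is guaranteed by \eqref{eq:annuity_conversion_factor} is a point of rigor the paper does not record, and is welcome.
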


\section{Results and Discussion}\label{Numerical_Implementation}
\noindent We focus on the continuation (working period) region ($x < {x^*}$), where $V(x) > G(x)$, and the stopping (full retirement) region ($x \geq {x^*}$), where $V(x) = G(x)$. The value ${x^*}$ is the optimal retirement wealth threshold. We implement the optimal results outlined in Theorem 
\ref{thm:value_function} and Theorem \ref{thm:optimal_policies} and discuss their practical applications in retirement planning.  

\subsection{Numerical Implementation and Results}
\noindent We set the model parameters following the common setting of \cite{gerrard2012choosing}, \cite{chen2021optimal}, and \cite{gao2022optimal}, for a male retiree at age $60$ with a \textit{varying} force of mortality $\delta_t$. Unless otherwise specified, the parameters are: \( w = 10 \) (wage rate), \( \alpha = 0.2 \) (weight for consumption in the period utility function), \( r = 0.02 \) (interest rate), \( \gamma = 2 \) (risk aversion coefficient), \( \gamma_1 = 1.2 \) (risk aversion coefficient for the second part of consumption), \( \theta = 0.07 \) (market price of risk factor),  subjective discount factor \( \beta \in (0.01,0.055)\),
and \( \bar{b} = 1 \) 
(upper bound for labor income).
\begin{figure}[h!]
\centering
\includegraphics[width=0.8\textwidth]{./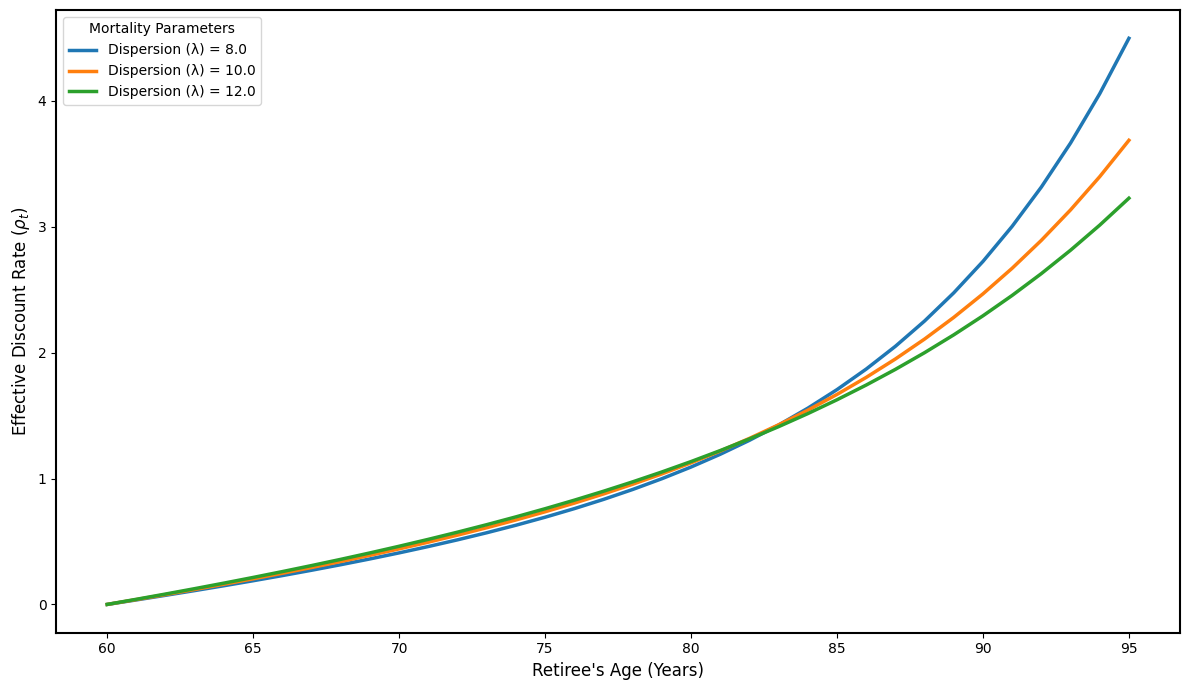}
\caption{Evolution of the effective discount rate ($\rho_t$) for an agent starting at age 60, shown for different mortality dispersion parameters ($\lambda$). Age-dependent mortality causes $\rho_t$ to rise sharply, reflecting higher mortality risk at advanced ages.}
\label{fig:rho_age}
\end{figure}

\autoref{fig:rho_age} examines the impact of the effective discount rate $\rho_t$, which reflects the age-dependent force of mortality. Compared to existing research on optimal annuitization, including studies by \cite{gerrard2012choosing} and 	\cite{gao2022optimal}, which assume a constant mortality rate, our findings demonstrate that $\rho_t$ increases with age, indicating higher mortality risk. This results in greater discounting of future annuity payments, as the probability of survival decreases. At older ages, higher $\rho_t$ values lower the present value of annuities, making them less appealing. Consequently, retirees may choose to keep more liquid assets or rely on alternative income sources, such as employment income. Conversely, younger retirees encounter lower $\rho_t$ values, leading to higher present values for annuities and more substantial incentives to annuitize.

Labor income acts as a partial hedge against longevity risk, as individuals with stable earnings might delay annuitization. Nonetheless, the influence of age-dependent mortality remains significant because even with labor income, $\rho_t$ continues to be a key factor in annuity valuation. As shown in
\autoref{fig:rho_age}, while older individuals qualify for higher annuity payouts due to shorter life expectancy, these benefits are offset by the increased discount rate $\rho_t$. Therefore, the optimal level of annuitization hinges on the interaction between mortality risk (via $\rho_t$), the retiree's age, the availability of other income sources, and liquidity needs. In summary, the effective discount rate $\rho_t$, which incorporates age-dependent mortality effects, plays a vital role in shaping optimal annuitization decisions.

There are two competing effects for an older individual. First, the agent's perspective (demand side): the agent's personal discount rate $\rho_t$ increases with age (see \autoref{fig:rho_age}). They value a dollar less next year, which reduces their desire to annuitize. Second, the insurer's perspective (supply side): insurers offer better payout rates to older people. Since an 85-year-old has a shorter life expectancy than a 65-year-old, the insurer can offer a much higher annual payment for the same lump sum. This is often referred to as \textit{mortality credits} and increases the incentive to annuitize.

\begin{figure}[h!]
\centering
\includegraphics[width=0.8\textwidth]{./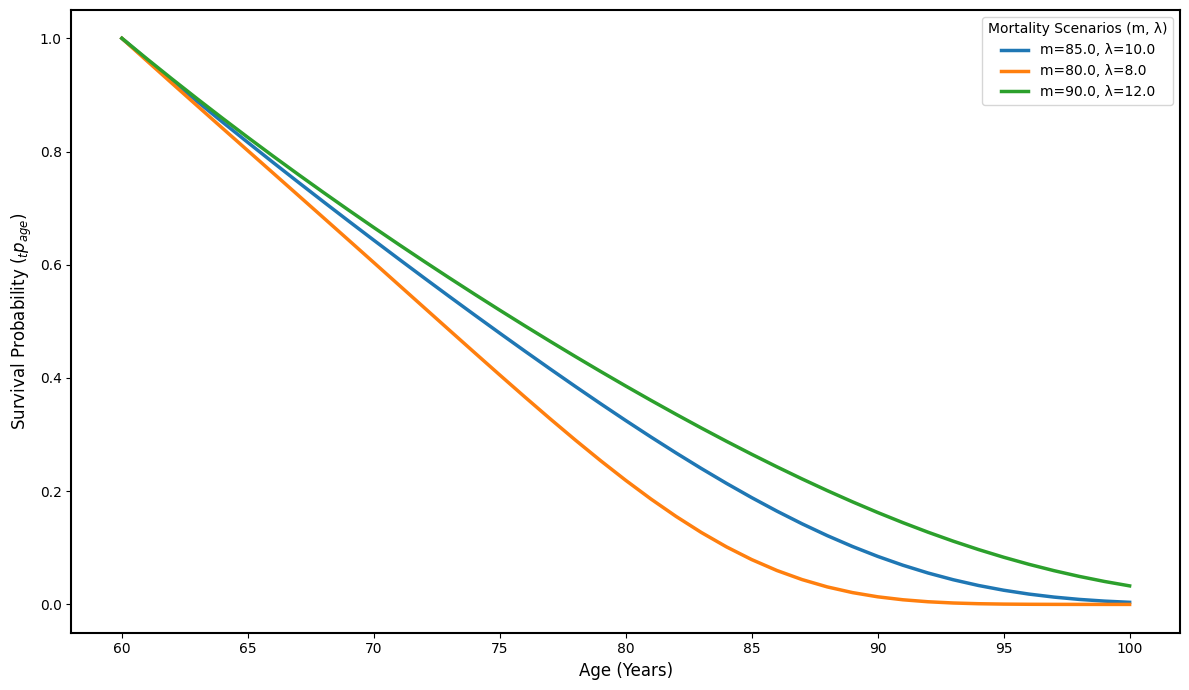}
\caption{Agent survival probability from age 60 ($\px[t]{60}$) under different mortality scenarios, defined by the modal age of death ($m$) and the dispersion parameter ($\lambda$).}
\label{fig:survival_age}
\end{figure}

In \autoref{fig:survival_age}, we observe that survival probabilities decline sharply after age 70, causing the effective discount rate, $\rho_t$, to increase. This rising discount rate reveals a critical trade-off in the decision to annuitize. While conventional wisdom suggests that rising mortality risk should increase the demand for annuities as a form of longevity insurance, our model highlights a powerful counteracting force: a high personal discount rate diminishes the subjective value of future annuity payments.

\begin{figure}[h!]
 \centering
\includegraphics[width=0.8\textwidth]{./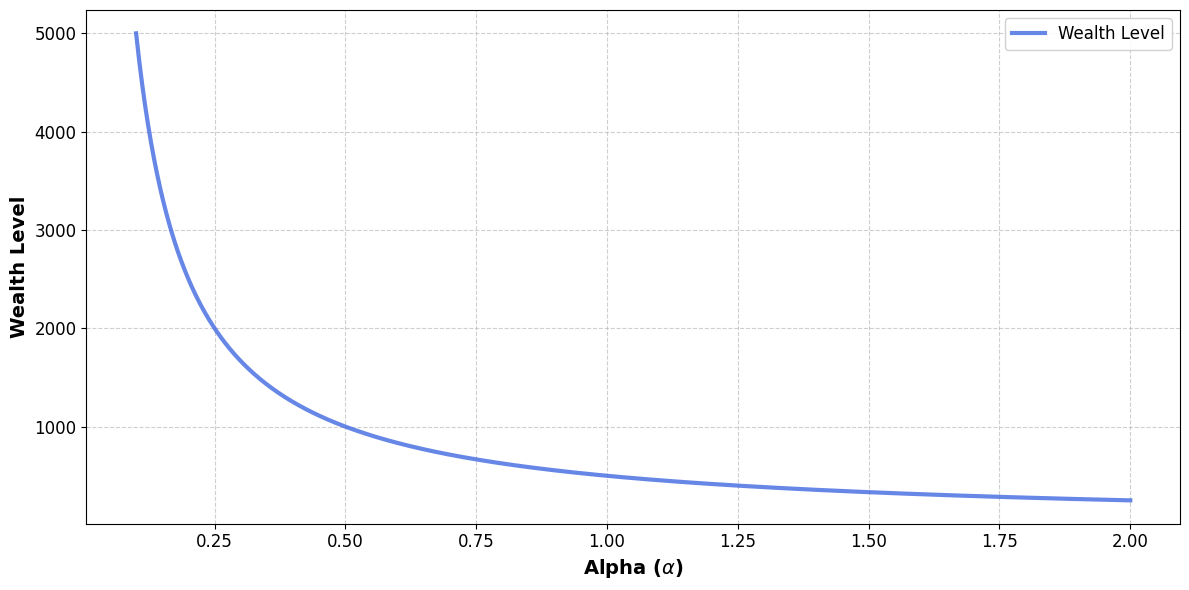} %
\caption{Wealth level and the contribution of consumption to the retiree's period utility. %
}
\label{fig:Wealth_alpha}
\end{figure}

At younger retirement ages (e.g., 60--70), the survival probability is high and $\rho_t$ is relatively low. However, the perceived need for longevity insurance is distant, so individuals may prefer to maintain liquidity and rely on other income sources. As they age into their 70s and 80s, the sharp increase in $\rho_t$ means they discount future income so heavily that even favorably priced annuities become unappealing. Essentially, the individual becomes increasingly unwilling to sacrifice present wealth for a future income stream they feel progressively less likely to receive. Therefore, our model suggests that the timing and extent of annuitization are governed by the tension between the need for longevity insurance and the declining subjective value of future income at advanced ages.

In \autoref{fig:Wealth_alpha}, we observe that the wealth level at annuitization is a decreasing function of $\alpha$, which is the weight for consumption in the period utility function. With a lower $\alpha$, consumption contributes more to the agent's utility.  The agent tends to annuitize and fully retire at a lower level of wealth.

\autoref{fig:optimal_consumption} illustrates the optimal consumption policy across two distinct regions: the continuation region ($x < x^*$), where it is optimal for the agent to continue working because wealth is below the retirement threshold, and the stopping (full retirement) region ($x \ge x^*$), where the agent's wealth has reached or exceeded the retirement threshold. Different behaviors in these wealth regions characterize the policy.

\begin{figure}[h!]
 \centering
 \includegraphics[width=1\textwidth]{./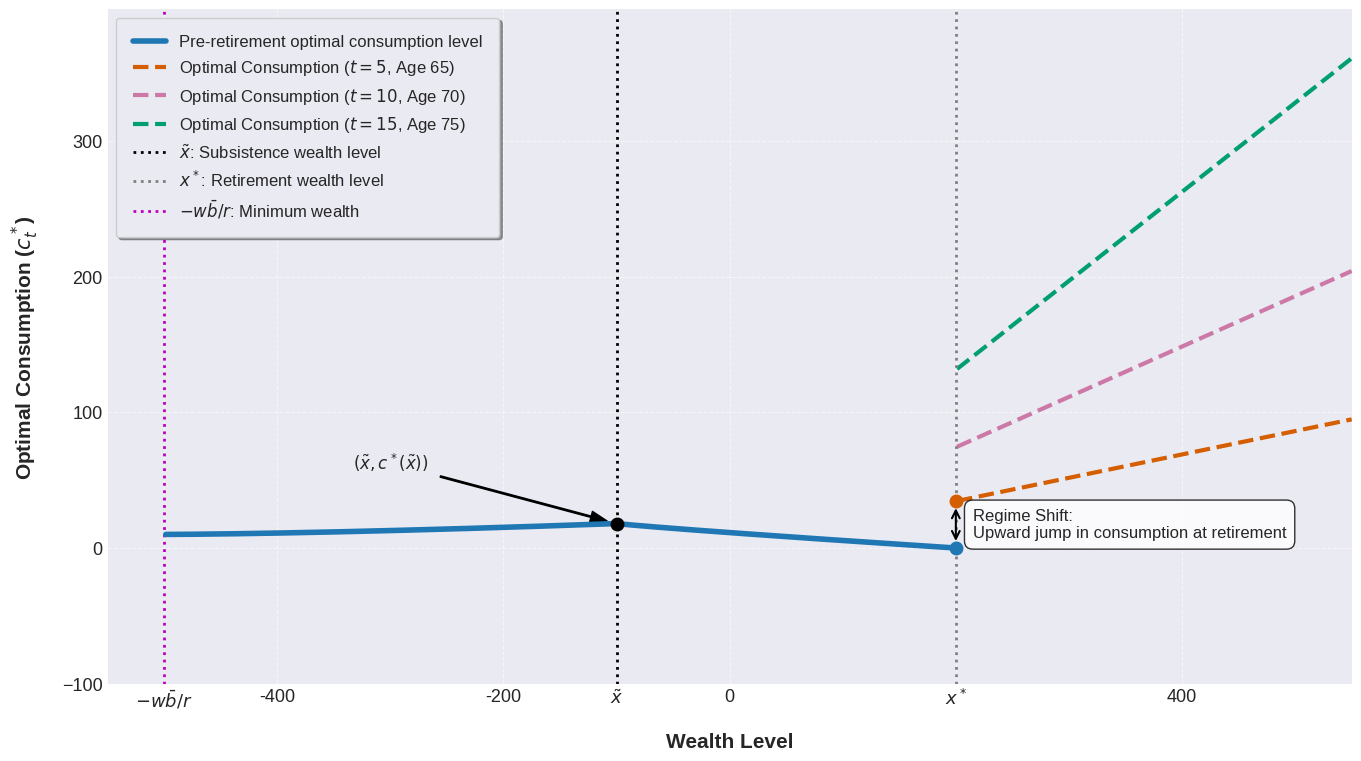} %
 \caption{Optimal consumption and wealth level. The solid blue line represents the pre-retirement optimal consumption path. The dashed lines show post-retirement consumption paths at different time horizons ($t=5, 10, 15$ years post-retirement, corresponding to ages 65, 70, and 75). The vertical dotted lines indicate key thresholds: the subsistence wealth level ($\tilde{x}$) and the retirement wealth level ($x^*$). Notably, at the point of retirement ($x^*$), there is a discrete upward jump in consumption, reflecting a regime shift in the retirees' spending behavior.}
 \label{fig:optimal_consumption}
\end{figure}

Before retirement, in \textit{pre-retirement}, as shown in \autoref{fig:optimal_consumption}, the agent's optimal consumption behavior increases with wealth before reaching the retirement wealth target. This holds for wealth levels below the subsistence wealth level $\tilde{x}$, which drops after the subsistence wealth level $\tilde{x}$. The agent prioritizes wealth accumulation to achieve their retirement goal, resulting in a relatively flat consumption curve. Upon reaching the wealth level $x^*$, the agent fully retires. This triggers an immediate and discrete upward jump in consumption. The blue linear optimal consumption function (see \autoref{fig:optimal_consumption}) represents the consumption level just before retirement, while the linear optimal consumption function at age $65$ represents the higher consumption level immediately after retiring.

In \textit{post-retirement~($x \geq x^*$)},  or \textit{full retirement}, consumption is represented by the family of linear optimal consumption functions at different ages (see \autoref{fig:optimal_consumption}). In this phase, the agent's optimal consumption behavior is significantly more sensitive to wealth, exhibiting a steeper, linear relationship. Furthermore, the consumption path shifts upward over time (from age 65 to 75), indicating that for a given level of wealth, the older agent consumes more, potentially due to a shorter remaining time horizon.

The optimal consumption function is therefore piecewise linear, with a notable discontinuity at the retirement threshold $x^*$. Contrary to a model where retirement might necessitate a drop in spending, this specification shows that achieving the threshold wealth, the agent unlocks a higher level of consumption. This is driven by the fact that the agent no longer needs to suppress consumption to save for a future goal and can begin to decumulate or spend the returns from their accumulated wealth and annuity payments.

\begin{figure}[h!]
 \centering
\includegraphics[width=.9\textwidth]{./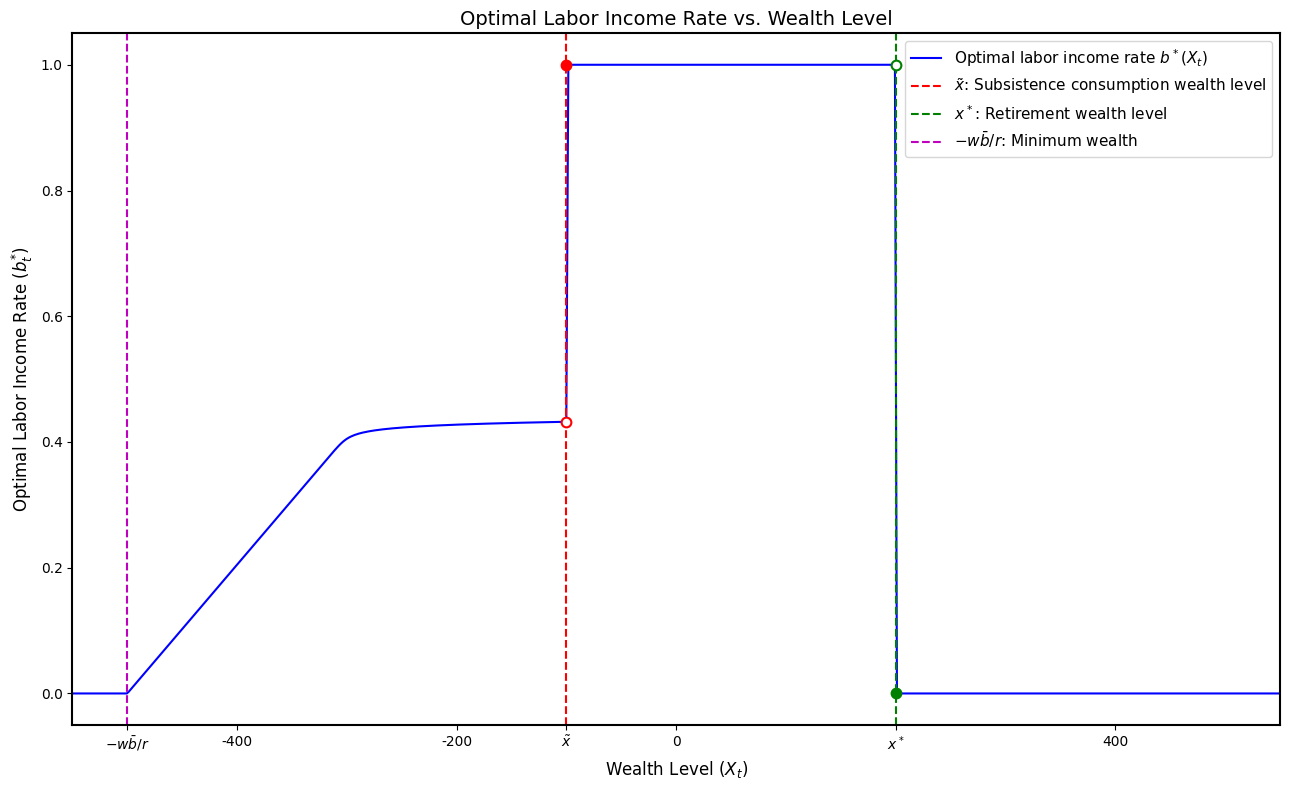}  %
 \caption{Optimal labor income and wealth level. The red dashed line represents the threshold wealth level corresponding to the labor supplied. The green dashed line represents the threshold wealth level corresponding to the optimal retirement time \( \tau \). The discontinuities at \(\tilde{x}\) and \({x^*}\) represent sudden shifts in the retiree's optimal labor supply as their wealth crosses critical thresholds. These shifts are linked to retiree decisions about annuitization (in the broader sense of securing future income streams) and the transition into a phase of reduced or no labor (retirement).}
 \label{fig:optimal_labor}
\end{figure}

In \autoref{fig:optimal_labor}, we observe that before retirement, the agent receives an increasing income from labor until the threshold wealth level \(\tilde{x}\) is reached. In the low-wealth region ($x < \tilde{x}$), the agent is actively engaged in labor or borrowing to supplement their resources and maintain a basic consumption level. This is a phase of significant reliance on labor income or debt.

As wealth reaches \(\tilde{x}\), there is a discrete jump upwards in the optimal labor income rate to \(\bar{b}\). This marks a change in strategy. At this wealth level, the agent is more secure and aims for a higher consumption level associated with a full retirement lifestyle. The jump reflects a temporary increase in labor supply to reach a desired savings level and align income with pre-retirement spending habits. Reaching \(\tilde{x}\) triggers a shift toward accumulating wealth for future annuitization, either formally through financial products or informally through savings drawdown. In the intermediate wealth range, the agent maintains a constant labor income rate of \(\bar{b}\).

Once an agent's wealth reaches the retirement threshold, $x^*$, they exit the labor force. This stage marks the transition to \textit{full retirement}, characterized by the complete cessation of labor income. From this point forward, the agent relies solely on their accumulated assets to support their desired consumption through accumulated wealth and annuity payments. 

\begin{figure}[h!]
\centering
\includegraphics[width=.8\textwidth]{./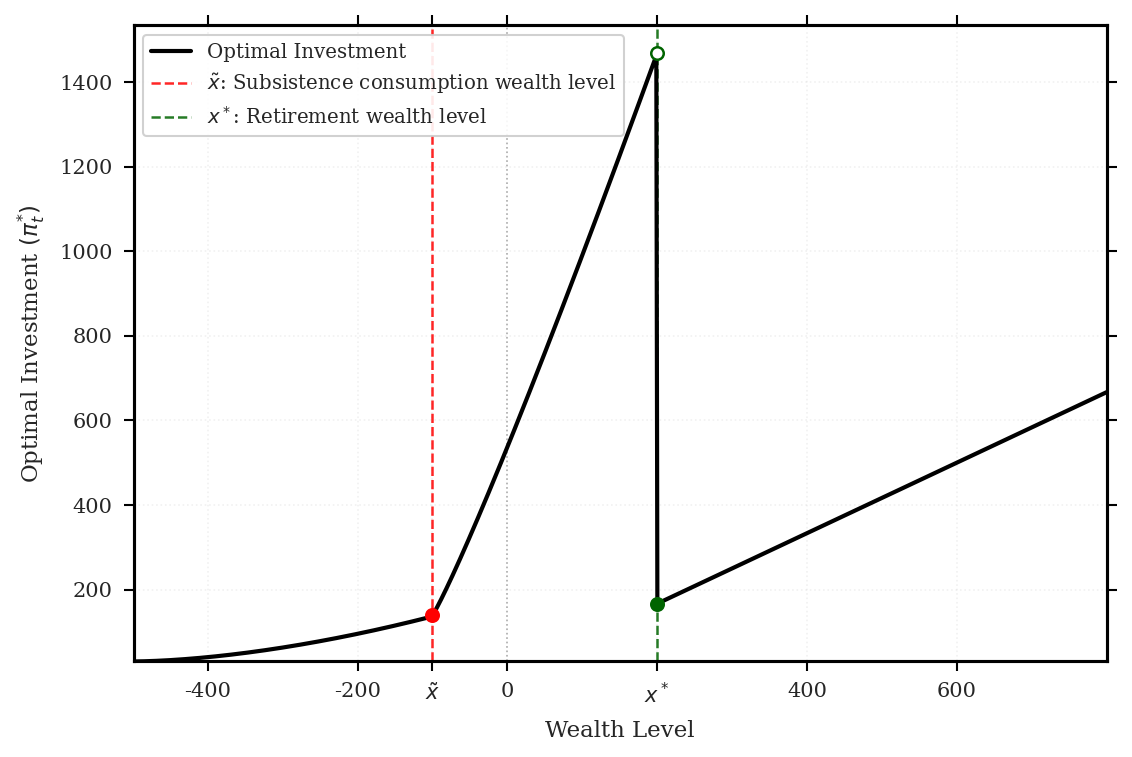} %
\caption{Optimal investment and wealth level. The red dashed line represents the threshold wealth level corresponding to the labor supply. The green dashed line represents the threshold wealth level corresponding to the optimal retirement time \( \tau \). The discontinuity (a jump) in the optimal investment strategy at \( x = \tilde{x} \) signifies a sudden change in the proportion of wealth allocated to the risky asset as the retiree's wealth crosses the threshold \( \tilde{x} \). The discontinuity at $x = {x^*}$ represents a shift from a constant investment amount (as a proportion of some base consumption level) to an investment amount that is linearly increasing with wealth.}
\label{fig:optimal_investment}
\end{figure}
In analyzing the optimal investment strategy shown in \autoref{fig:optimal_investment}, a key feature is the discontinuity at the retirement wealth threshold, $x^*$. The jump is the fundamental and instantaneous change in the agent's investment strategy, driven by the economics of labour income. Before full retirement, an agent's total wealth consists of both financial wealth and labour income, with the latter representing the present value of all future labor income. This labor income acts as a large, stable, bond-like asset. To achieve an optimal risk balance across their total portfolio, the agent can aggressively invest their financial wealth in risky assets and safe portion represented by their future income. This explains the high level of investment just before the retirement threshold.

At the moment of full retirement, when wealth reaches $x^*$, labour income vanishes. The agent's total wealth is now composed entirely of their financial portfolio. To avoid being over-exposed to market risk, the agent must immediately rebalance by drastically reducing their allocation to risky assets. The sharp downward jump in the optimal investment, $\pi_t^*$, represents this strategic de-risking. The optimal policy we derived from our model (see \eqref{Optimal_Investment_main_aligned}) captures this by switching from a complex value function that accounts for labor income to the classic Merton portfolio rule for a fully retired agent with no human capital.

\autoref{fig:optimal_investment} shows the optimal proportion of risky investment. We observe a substantial decline in \textit{investment} by the agent after full retirement. As reported in \cite{ferrari2023optimal}, this result implies that agents start saving for retirement as a de-risking strategy as people tend to withdraw their investment in risky assets upon retirement to hedge the risk of unemployment.  \autoref{fig:optimal_investment} also shows the optimal proportion of risky investment. We observe a significant decline in \textit{investment} by the agent after full retirement. As reported in \cite{ferrari2023optimal}, this result suggests that agents begin saving for retirement as a de-risking strategy, since people tend to withdraw their investments in risky assets upon retirement to hedge against unemployment risk.

The numerical results presented in Figures \ref{fig:optimal_consumption} and \ref{fig:optimal_investment} are consistent with existing research on optimal annuitization (see, e.g., \cite{coile2009household, ferrari2023optimal}). A key finding of our model is that the effective discount rate, $\rho_t$, increases with age, directly reflecting a higher mortality risk. Consequently, at older ages, these higher $\rho_t$ values lower the subjective present value of annuities, making them a less appealing financial instrument for wealth management.

\begin{figure}[h!]
\centering
\includegraphics[width=0.8\textwidth]{./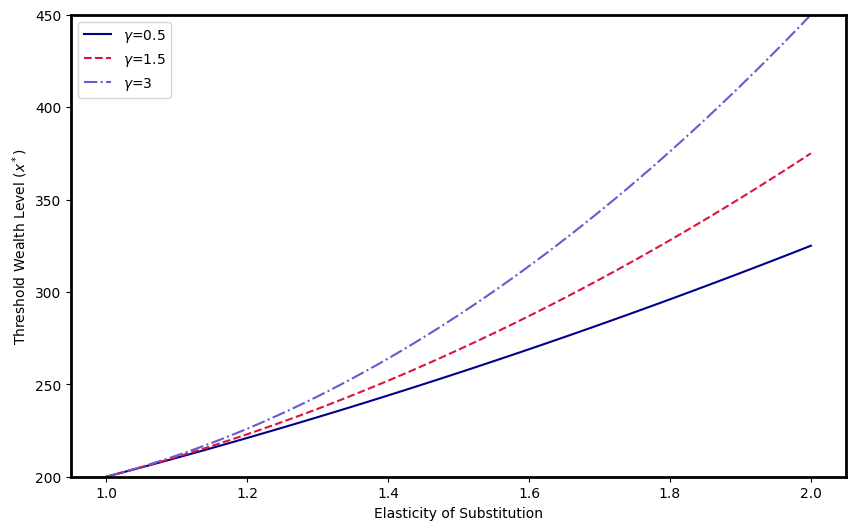} %
\caption{Retirement wealth level and elasticity of substitution. %
}
\label{fig:threshholdwealth_elecitcity_sub}
\end{figure}

In \autoref{fig:threshholdwealth_elecitcity_sub}, we observe that the critical wealth level ${x^*}$ increases with the elasticity of substitution between consumption and leisure. An agent with a higher elasticity of substitution between consumption and leisure can consume more and achieve greater utility than an agent with a lower elasticity of substitution. Consequently, the former has a stronger incentive to continue working, leading to a higher threshold retirement wealth level and a tendency to delay full retirement.

\begin{figure}[h!]
\centering
\includegraphics[width=0.8\textwidth]{./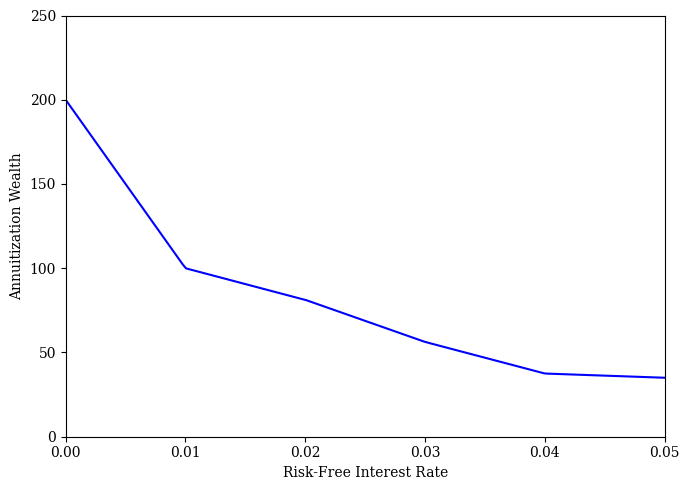} %
\caption{Sensitivity analysis of threshold wealth for annuitization for changes in the risk-free interest rate.}
\label{fig:wealth_risk}
\end{figure}
\autoref{fig:wealth_risk} illustrates the relationship between the risk-free interest rate and the wealth required for annuitization in retirement planning. We observe that in a low-interest-rate environment, retirees may need to allocate a larger portion of their retirement savings to an annuity to achieve their desired income level compared to a high-interest-rate environment. This relationship highlights the importance of considering broader economic conditions, particularly interest rate trends, when making annuitization decisions as part of retirement planning. \autoref{fig:wealth_risk} also shows that the cost of securing a guaranteed income stream through an annuity is inversely related to the prevailing risk-free interest rate, as high interest rates reduce the initial investment required for annuitization.

\autoref{fig:annuity_payment_rate} illustrates an agent's optimal annuity payment rate, \( k_t^* \), which characterizes the individual's annuitization strategy conditional on their wealth, \( x \). During the working period (\( x < x^* \)), the \textit{optimal annuity rate is zero}, as the individual derives income from labor and postpones annuitization. This reflects the strategy of preserving liquid wealth for when labor income ceases. Upon reaching the retirement wealth threshold (\( x \ge x^* \)), the optimal strategy is not to withdraw a constant percentage but to engage in a \textit{dynamic, phased annuitization}. The process begins with a relatively small annuity rate at the onset of retirement. As the agent ages, the optimal rate, \( k_t^* \), \textit{steadily increases}. This indicates that the agent should progressively allocate more wealth towards securing a guaranteed lifetime income, allowing them to balance the need for portfolio growth in early retirement with the increasing importance of mitigating longevity risk in later life.

\subsection{Optimal Policies Analysis}\label{Optimal_Policies_Analysis}
Retirement planning represents a critical phase in an individual's life, demanding careful management of accumulated wealth to ensure a desired level of consumption and financial security over a potentially extended period, typically spanning from age 60 to 90. This section focuses on the analysis of theoretically derived optimal policies in Theorem \ref{thm:optimal_policies} for three key decision variables faced by agents: consumption, labor income rate, and investment. The mathematical expressions for these optimal policies define how a agent should ideally behave to these variables to maximize their well-being. We provide a detailed economic interpretation of the various policy regimes and the underlying economic parameters that shape them. 

\begin{figure}[h!]
\centering
\includegraphics[width=0.8\textwidth]{./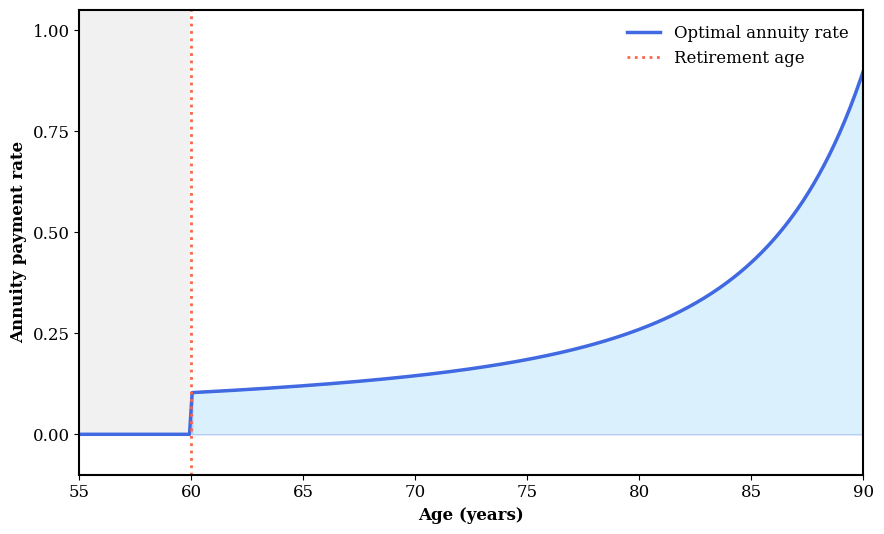} %
\caption{Optimal annuity payment rate with age.}
\label{fig:annuity_payment_rate}
\end{figure}

Within the dynamic optimization framework of our model, the agent's total wealth at any given time $t$, denoted by $x$, functions as the key state variable. This variable captures the agent's current financial standing and determines the regime for each optimal policy. The agent's decisions, or control variables, include the optimal consumption rate ($c_t^*$), which is the amount of wealth they choose to spend on goods and services at time $t$. The optimal labor income rate ($b_t^*$) indicates the rate at which the agent earns income from work, if they opt to do so at time $t$. The optimal investment strategy ($\pi_t^*$) specifically refers to the proportion of wealth allocated to a risky asset at time $t$. The agent selects these control variables at each moment to maximize their overall utility, as outlined by the economic model in Section 
\ref{market_model}. 

A key feature of the derived optimal policies is their piecewise nature. Each policy \( ( c_t^{*}, b_t^{*}, \pi_t^{*} ) \) adopts different functional forms depending on the current wealth level $x$. Changes in these policies are triggered by specific wealth thresholds: $w \overline{b}/r$, $	\tilde{x},$ and ${x^*}$. These thresholds represent critical points at which the agent’s optimal behavior undergoes a qualitative shift. The lower threshold, \( w \overline{b}/r\), represents the present value of the agent's future labor income and a boundary linked to borrowing constraints or a minimum wealth needed to sustain a certain level of consumption. The other two thresholds $\tilde{x}$ and ${x^*}$, are described in Propositions  \ref{prop:threshold_x_tilde} and \ref{prop:optimal_threshold} represent the agent's optimal behavior shift linked to labour supply and full retirement wealth threshold, respectively. They are significant milestones in the agent's financial journey. These milestones may signal the point at which they can transition to a higher standard of living or cease relying entirely on labor income. The presence of these distinct wealth regimes highlights the adaptive nature of the optimal retirement strategy, where decisions depend on the agent's evolving financial situation.

Figures \ref{fig:optimal_consumption}, \ref{fig:optimal_labor}, and \ref{fig:optimal_investment} illustrate that the optimal policies are \textit{piecewise functions of wealth}. The first behavioral regime begins at the solvency threshold, which, given the parameters $w=10$, $\bar{b}=1$, and $r=0.02$, is calculated as $-w\bar{b}/r = -500$. The figures show how the functional form for consumption, labor, and investment changes as wealth crosses this and other critical thresholds.

\subsubsection{Analysis of Optimal Consumption Policy}%
\noindent The optimal consumption policy is structured across three distinct wealth regimes, each dictating how spending responds to changes in accumulated wealth ($x$). In the low-wealth regime, where $- \frac{w\bar{b}}{r} < x < \tilde{x}$, consumption is not fixed but varies with current wealth. This variation is determined by the marginal value of additional wealth, $V'_{\mathrm{int}}(x)$, ensuring that expenditures, including those for essential needs, are precisely aligned with the agent's financial position. In the intermediate regime, defined by $\tilde{x} \leq x < x^*$, optimal consumption remains flexible and continues to adjust as assets increase. Here, the marginal value of wealth, $V'_{\bar{b}}(x)$, governs spending, allowing the standard of living to improve progressively with financial growth. In the high-wealth regime, where $x \geq x^*$, consumption becomes directly proportional to wealth.  At this stage, the agent consumes a fixed proportion of assets annually, indicating that basic consumption needs are met and spending scales with asset size. This transition from wealth-sensitive consumption in lower regimes to a proportional rule at high wealth reflects a systematic approach to retirement financial management, where spending strategies shift from detailed adjustments to efficient resource utilization as financial security increases.

\subsubsection{Analysis of Optimal Labor Income Rate}%
\noindent The optimal labor income rate, $b_t^*$, demonstrates piecewise behavior as a function of wealth, $x$. In the low wealth regime, defined by $x < \tilde{x}$, $b_t^*$ is positive and increases linearly with wealth, as described by $b_t^* = \left(\frac{1 - \alpha}{w\alpha}\right) k^{\frac{\gamma_1-1}{\gamma_1}} \rho_{t}^{\frac{1}{\gamma_1}} x$. Under limited wealth, the agent is required to engage in labor to supplement income, and the labor supply adjusts in direct proportion to the wealth level within this regime.

In the intermediate wealth range, where $\tilde{x} \leq x < x^*$, the optimal labor income rate remains constant at its maximum value, $\bar{b}$. This regime is characterized by continued labor participation to facilitate further wealth accumulation or to sustain elevated consumption levels. When wealth enters the high regime, defined by $x \geq x^*$, the optimal labor income rate reduces to zero. At this stage, the agent no longer requires labor income and allocates all available time to leisure. This structure establishes a non-monotonic relationship: labor supply increases with wealth in the low regime, remains at its maximum in the intermediate regime, and ceases entirely in the high wealth regime.

\subsubsection{Analysis of Optimal Investment Strategy}%
\noindent The optimal investment strategy, $\pi_t^*$, is a piecewise function of the agent's wealth, $x$, with the policy shifting at critical wealth thresholds. The complete policy is defined as in equation \eqref{Optimal_Investment_main_aligned}. In the continuation (working period)  regimes ($x < x^*$), the amount invested in the risky asset is determined by the agent's absolute risk tolerance (the term $-V'/V''$) and the market price of risk ($\theta/\sigma$). This reflects a dynamic strategy where the agent's willingness to accept risk evolves with their wealth.

Upon entering retirement ($x \geq x^*$), the strategy simplifies to the classical Merton’s problem (see \cite{koo2013optimal}) where the agent invests a \textit{constant proportion} of their wealth, $\theta/\sigma \gamma_1$, in the risky asset. This shift represents a transition from the complex risk adjustments required during the wealth accumulation phase to a more stable asset allocation, focusing on wealth preservation and long-term growth in full retirement.

\subsubsection{Role of Parameters}
\noindent The model parameters quantitatively determine the locations of wealth thresholds and the magnitudes of optimal policies. The optimal investment strategy exhibits significant sensitivity to the agent's risk preferences and current market conditions. An increase in the risk aversion coefficient, $\gamma_1$, decreases the proportion of wealth allocated to the risky asset, as reflected in the term $\theta/(\sigma\gamma_1)$. Investment is directly related to the market price of risk, represented by the Sharpe ratio, $\theta/\sigma$. Additional parameters characterize the agent's constraints and available opportunities. An increase in the risk-free interest rate, $r$, shifts the solvency boundary, $-w\bar{b}/r$, thereby affecting policy regimes. Moreover, a higher wage rate, $w$, amplifies the effect of employment, since labor income, $w \cdot b_t^*$, is directly proportional to the wage.

To summarize the numerical results,  Section \ref{Optimal_Policies_Analysis} analyzes the optimal policies for an \textit{agent's} consumption, labor, and investment. The results reveal a sophisticated framework where decisions are piecewise functions of the agent's current wealth, defining distinct behavioral phases separated by critical thresholds. As illustrated in Figures \ref{fig:optimal_consumption}, \ref{fig:optimal_labor}, and \ref{fig:optimal_investment}, these strategies are consistent with findings in the existing literature (see, e.g., Section 6.2 of \cite{ferrari2023optimal}). The analysis also demonstrates that the model's underlying economic parameters, such as risk aversion and interest rates, are critical in shaping these thresholds and policy values. While this theoretical framework provides valuable insights, real-world retirement decisions often involve additional factors that necessitate more comprehensive planning tools.

\section{Conclusion}\label{Conclusion_Recommendations}
\noindent In this paper, we address an optimal annuitization problem with labor income, considering an age-dependent force of mortality. We derive closed-form solutions for the value function and the optimal joint strategy for investment, consumption, and labor supply. This work highlights practical applications of our findings in retirement planning, offering valuable insights for individuals and policymakers. Key results show that an \textit{agent's} strategies change as their wealth surpasses certain thresholds, with the availability of labor income strongly motivating wealth accumulation. Our results also indicate that in low-interest environments, a larger share of savings must be allocated to an annuity to meet income goals.

Furthermore, our findings demonstrate that the effective discount rate increases with age, reflecting a higher mortality risk. This creates \textit{two competing effects} for an older \textit{agent}. First, as our model shows, the \textit{agent's} personal discount rate, $\rho_t$, increases with age (see \autoref{fig:rho_age}). They value a dollar less next year, which reduces their desire to annuitize. Second, from the insurer's perspective, older \textit{agents} are offered better payout rates for the same lump sum. This effect, often referred to as \textit{mortality credits}, increases the incentive to annuitize. Our model's key finding is that the disincentive from the high personal discount rate can outweigh the incentive from these favorable mortality credits. Therefore, the optimal level of annuitization depends critically on the interplay of mortality risk, the \textit{agent's} age, alternative income sources, and liquidity requirements.

This research has certain limitations. The model abstracts from several institutional and behavioral features that may influence annuitization decisions, such as age-dependent labor income, fixed retirement ages, and the presence of employer or public pension benefits. Extending the analysis to incorporate these elements, or to allow for partial and sequential annuitization as mortality risk evolves, would be a natural avenue for future research. Nonetheless, even within this stylized framework, introducing an age-dependent force of mortality reveals nuanced and sometimes counterintuitive interactions between longevity risk, consumption smoothing, and annuity demand. These insights underscore the theoretical importance of age-varying mortality in shaping optimal retirement and annuitization strategies and provide a foundation for future empirical and policy-oriented studies.

\vspace{1em}

{\bf Acknowledgements}: We acknowledge the support of the Natural Sciences and Engineering Research Council of Canada (NSERC), funding reference number RGPIN-2021-04112.  Nous remercions le Conseil de recherches en sciences naturelles et en g\'enie du Canada (CRSNG) de son soutien, num\'ero de r\'ef\'erence RGPIN-2021-04112.

\bibliographystyle{plainnat}
\bibliography{reference2.bib}

\appendix
\section{Appendix: Proofs of Theorems}
This appendix contains the proofs of theorems that would disrupt the flow of the paper.
\numberwithin{equation}{subsection}

\subsection{Proof of Theorem \ref{thm:value_function} (Value Function):}\label{Proof_of_Theorem_1:_Value_Function}
  We consider the pre-retirement problem ($t<\tau$). The value function $V(x)$ satisfies the HJB equation \eqref{eq:HJB_cont_region_rho_appendix} in the continuation region $x<x^*$, where $V(x)>G(x)=\frac{k^{1-\gamma_1}}{\rho_t(1-\gamma_1)}x^{1-\gamma_1}$. We analyze the problem by dividing the continuation region based on the optimal labor supply control $b_t$.

\vspace{0.5em}
\noindent
Let $V(x)$ be the stationary value function satisfying the HJB equation in equation \eqref{eq:HJB_cont_region_rho_appendix}, 
\begin{equation}\label{eq:HJB_pre_proof_detail}
\rho_{t} V(x) = \sup_{c\ge0, 0 \leq b \leq \bar{b}, \pi} \left[ u_1(c, b) + \left( r x + \pi (\mu - r) - c + w b \right) V'(x) + \frac{1}{2} \sigma^2 \pi^2 V''(x) \right].
\end{equation}
where $\rho_t$ is the effective discount rate $\rho_t$ in equation \eqref{eq:age-dependent_force_of_mortality}, which incorporates age-dependent force of mortality at age $t$ defined in equation \eqref{eq:Gompertz}.

\textbf{Case I:} $-r w \bar{b} < x < \tilde{x}$ (Interior Labor Supply $0 \le b^* < \bar{b}$). 
\vspace{0.5em}
In this region, the optimal controls are determined by the first-order conditions (FOCs) derived in  \autoref{thm:optimal_policies}
\begin{align}
\pi^*(x) &= -\frac{\theta}{\sigma} \frac{V'(x)}{V''(x)}, \label{eq:pi_star_detail} \\
V'(x) &= \frac{\partial u_1(c^*, b^*)}{\partial c}, \label{eq:c_star_detail} \\
\frac{\partial u_1(c^*, b^*)}{\partial b} &= -w V'(x). \label{eq:b_star_detail}
\end{align}
Using the Cobb-Douglas utility $u_1(c,b) = \frac{1}{1-\gamma}(c^\alpha b^{1-\alpha})^{1-\gamma}$, conditions \eqref{eq:c_star_detail} and \eqref{eq:b_star_detail} imply the relationship $c^* = -\frac{w\alpha}{1-\alpha} b^*$, or equivalently $b^* = -\frac{1-\alpha}{w\alpha} c^*$.
Substituting these optimal controls into the HJB equation \eqref{eq:HJB_pre_proof_detail}, we obtain the maximized HJB equation for $V(x)=V_{\text{int}}(x)$. We express the utility and drift terms using $c^*$:
$b^* = -\frac{w\alpha}{1-\alpha}c^*$, 
$u_1(c^*,b^*) = \frac{h}{1-\gamma}(c^*)^{1-\gamma}$, where $h=\alpha^{\alpha(1-\gamma)}(-\frac{w\alpha}{1-\alpha})^{(1-\alpha)(1-\gamma)}$.
$-c^*+wb^* = -c^*+w(-\frac{1-\alpha}{w\alpha}c^*) = -\frac{1}{\alpha}c^*$.

Substituting $\pi^*$ using \eqref{eq:pi_star_detail}, the maximized HJB becomes
\begin{equation}\label{eq:HJB_substituted_detail}
\rho_{t} V_{\text{int}}(x) = \frac{h}{1-\gamma}(c^*)^{1-\gamma} + \left( rx - \frac{1}{\alpha}c^* \right)V_{\text{int}}'(x) - \frac{1}{2} \theta^2 \frac{(V_{\text{int}}'(x))^2}{V_{\text{int}}''(x)}.
\end{equation}
Now, assume optimal consumption $c^*=C_{\text{int}}(x)$ is invertible with inverse $x=X_{\text{int}}(c)$. We need to express $V_{\text{int}}'(x)$ and $V_{\text{int}}''(x)$ in terms of $c=c^*$. From FOC \eqref{eq:c_star_detail}, we have
\[
V_{\text{int}}'(x)=\frac{\partial u_1(c,b^*(c))}{\partial c} = K_v c^{-\gamma}, \quad \text{where } K_v = \left(-\frac{w\alpha}{1-\alpha}\right)^{(1-\alpha)(1-\gamma)}.
\]
(Note: $K_v$ involves potentially non-standard terms if $w>0$).
The second derivative is
\[
V_{\text{int}}''(x) = \frac{d(K_v c^{-\gamma})}{dx} = \frac{d(K_v c^{-\gamma})}{dc} \frac{dc}{dx} = (-\gamma K_v c^{-\gamma-1})\frac{1}{X_{\text{int}}'(c)}.
\]
Substituting $V_{\text{int}}'$ and $V_{\text{int}}''$ into the maximized HJB \eqref{eq:HJB_substituted_detail}:
\begin{align*}
\rho_t V_{\text{int}}(X_{\text{int}}(c)) &= \frac{h}{1-\gamma} c^{1-\gamma} + \left(rX_{\text{int}}(c) - \frac{1}{\alpha}c \right)K_v c^{-\gamma} - \frac{1}{2}\theta^2 \frac{(K_v c^{-\gamma})^2}{-\gamma K_v c^{-\gamma-1}/X_{\text{int}}'(c)} \\
\rho_t V_{\text{int}}(X_{\text{int}}(c)) &= \frac{h}{1-\gamma}c^{1-\gamma} + K_v r X_{\text{int}}(c) c^{-\gamma} - \frac{K_v}{\alpha} c^{1-\gamma} + \frac{1}{2\gamma}\theta^2 K_v c^{1-\gamma}X_{\text{int}}'(c).
\end{align*}
To obtain the ODE for $X_{\text{int}}(c)$, we differentiate this equation with respect to $c$. We use the relation $\frac{d}{dc}V_{\text{int}}(X_{\text{int}}(c)) = V_{\text{int}}'(X_{\text{int}}(c))\cdot X_{\text{int}}'(c) = (K_v c^{-\gamma})X_{\text{int}}'(c)$.

\[
\begin{split}
\rho_t K_v c^{-\gamma} X_{\text{int}}'(c) ={}& hc^{-\gamma} + K_v r X_{\text{int}}'(c)c^{-\gamma} - K_v r \gamma X_{\text{int}}(c) c^{-\gamma-1} - \frac{K_v(1-\gamma)}{\alpha}c^{-\gamma} \\
& + \frac{\theta^2 K_v (1-\gamma)}{2\gamma}c^{-\gamma} X_{\text{int}}'(c) + \frac{\theta^2 K_v c^{1-\gamma}}{2\gamma}X_{\text{int}}''(c)
\end{split}
\]
Multiplying by $c^{\gamma+1}/K_v$ and rearranging terms yields a second-order linear ODE for $X_{\text{int}}(c)$:
\begin{equation}\label{eq:second_order_ODE_detail}
\mathcal{A}_{\text{int}} c^2 X''_{\text{int}}(c) + \mathcal{B}_{\text{int}} c X'_{\text{int}}(c) + \mathcal{C}_{\text{int}} X_{\text{int}}(c) = \mathcal{D}_{\text{int}} c + \mathcal{E}_{\text{int}},
\end{equation}
where $\mathcal{A}_{\text{int}}, \mathcal{B}_{\text{int}}, \mathcal{C}_{\text{int}}, \mathcal{D}_{\text{int}}, \mathcal{E}_{\text{int}}$ are constants depending on the model parameters $(\rho,r,\theta,\gamma,\alpha,h,K_v)$. Specifically: 
$\mathcal{A}_{\text{int}}=\frac{1}{2\gamma}\theta^2$, 
$\mathcal{B}_{\text{int}}=\rho_t - r - \frac{1}{2\gamma}\theta^2(1-\gamma)$, 
$\mathcal{C}_{\text{int}}=r\gamma$, 
$\mathcal{D}_{\text{int}}=\frac{1-\gamma}{\alpha} + \frac{h}{K_v}$, 
$\mathcal{E}_{\text{int}}=0$.
The general solution to \eqref{eq:second_order_ODE_detail} is $X_{\text{int}}(c)=X_h(c)+X_p(c)$. The homogeneous solution is $X_h(c)=A_1 c^{m'_1} + A_2 c^{m'_2}$ where $m'_1,m'_2$ are roots of the characteristic equation $\mathcal{A}_{\text{int}}m(m-1)+\mathcal{B}_{\text{int}}m+\mathcal{C}_{\text{int}}=0$. The particular solution $X_p(c)$ is typically found to be proportional to $c$. Assuming $m'_1>0$ and $m'_2<0$, boundary conditions usually imply $A_1=0$. Thus, the solution takes the form:
\begin{equation}\label{eq:Xc_solution_detail}
X_{\text{int}}(c, A_2) = A_2 c^{m'_2} + \text{const} \cdot c.
\end{equation}
Substituting this solution structure back into the expression for $\rho_t V_{\text{int}}(X_{\text{int}}(c))$ gives the functional form of $V_{\text{int}}(x)$ which depends on $A_2$ and involves terms related to $c=C_{\text{int}}(x)$. This corresponds to the first case in Theorem \eqref{thm:value_function}.

\vspace{1em}
\noindent
\textbf{Case II:} $\tilde{x} \le x < x^*$ (Corner Labor Supply $b=\bar{b}$)
\vspace{0.5em}
Here $b^*=\bar{b}$. The HJB equation is
\begin{equation}\label{eq:HJB_post_detail}
\rho_{t} V_{\bar{b}}(x) = \max_{(\pi, c)} \left[ u_1(c, \bar{b}) + \left( r x + \pi (\mu - r) - c + w \bar{b} \right) V_{\bar{b}}'(x) + \frac{1}{2} \sigma^2 \pi^2 V_{\bar{b}}''(x) \right].
\end{equation}
The FOCs are
\begin{align}
\pi^*(x) &= -\frac{\theta}{\sigma} \frac{V_{\bar{b}}'(x)}{V_{\bar{b}}''(x)}, \label{eq:pi_star_post_detail} \\
V_{\bar{b}}'(x) &= \frac{\partial u_1(c^*, \bar{b})}{\partial c}. \label{eq:c_star_post_detail}
\end{align}
Substituting $\pi^*$ into \eqref{eq:HJB_post_detail} yields the maximized HJB:
\begin{equation}\label{eq:HJB_substituted_post_detail}
\rho_{t} V_{\bar{b}}(x) = u_1(c^*, \bar{b}) + \left( r x - c^* + w \bar{b} \right)V_{\bar{b}}'(x) - \frac{1}{2} \theta^2 \frac{(V_{\bar{b}}'(x))^2}{V_{\bar{b}}''(x)}.
\end{equation}
Assume $c^*=C_{\bar{b}}(x)$ and $x=X_{\bar{b}}(c)$. From FOC \eqref{eq:c_star_post_detail}
\[
V_{\bar{b}}'(x) = \frac{\partial u_1(c,\bar{b})}{\partial c} = K'_{\bar{b}} c^{-\gamma_1}, \quad \text{where } K'_{\bar{b}}=(\bar{b}^{1-\alpha})^{1-\gamma} \text{ and } \gamma_1=1-\alpha(1-\gamma).
\]
Then $V_{\bar{b}}''(x) = -\gamma_1 K'_{\bar{b}} c^{-\gamma_1-1}/X_{\bar{b}}'(c)$.
Substituting $V_{\bar{b}}'$ and $V_{\bar{b}}''$ into \eqref{eq:HJB_substituted_post_detail}
\begin{align*}
\rho_t V_{\bar{b}}(X_{\bar{b}}(c)) &= u_1(c,\bar{b}) + (rX_{\bar{b}}(c) - c + w\bar{b}) K'_{\bar{b}} c^{-\gamma_1} - \frac{1}{2}\theta^2 \frac{(K'_{\bar{b}}c^{-\gamma_1})^2}{-\gamma_1 K'_{\bar{b}} c^{-\gamma_1-1}/X_{\bar{b}}'(c)}. \\
\rho_t V_{\bar{b}}(X_{\bar{b}}(c)) &= u_1(c,\bar{b}) + K'_{\bar{b}} r X_{\bar{b}}(c)c^{-\gamma_1} - K'_{\bar{b}}c^{1-\gamma_1} + K'_{\bar{b}}w\bar{b}c^{-\gamma_1} + \frac{1}{2\gamma_1}\theta^2 K'_{\bar{b}} c^{1-\gamma_1} X_{\bar{b}}'(c).
\end{align*}
Differentiating with respect to $c$ (using $\frac{d}{dc}V_{\bar{b}}(X_{\bar{b}}(c)) = V_{\bar{b}}'(X_{\bar{b}}(c))X_{\bar{b}}'(c) = K'_{\bar{b}}c^{-\gamma_1} X_{\bar{b}}'(c)$) leads to a second-order linear ODE for $X_{\bar{b}}(c)$
\begin{equation}\label{eq:second_order_ODE_post_detail}
\mathcal{A}_{\bar{b}} c^2 X''_{\bar{b}}(c) + \mathcal{B}_{\bar{b}} c X'_{\bar{b}}(c) + \mathcal{C}_{\bar{b}} X_{\bar{b}}(c) = \mathcal{D}_{\bar{b}} c + \mathcal{E}_{\bar{b}},
\end{equation}
where $\mathcal{A}_{\bar{b}}, \mathcal{B}_{\bar{b}}, \mathcal{C}_{\bar{b}}, \mathcal{D}_{\bar{b}}, \mathcal{E}_{\bar{b}}$ are constants depending on model parameters and $\bar{b}$.
The general solution is
\begin{equation}\label{eq:Xc_post_detail}
X_{\bar{b}}(c, B_1, B_2) = B_1 c^{m''_1} + B_2 c^{m''_2} + X_{p, \bar{b}}(c),
\end{equation}
where $m''_1, m''_2$ are roots of the characteristic equation for \eqref{eq:second_order_ODE_post_detail} and $X_{p, \bar{b}}(c)$ is a particular solution. Substituting back yields $V_{\bar{b}}(x)$ involving $B_1, B_2$, corresponding to the second case in Theorem \eqref{thm:value_function}.

\vspace{1em}
\noindent
\textbf{Retirement Region:} $x \ge x^*$
\vspace{0.5em}
Here $V(x)=G(x)=\frac{k^{1-\gamma_1}}{\rho_t(1-\gamma_1)}x^{1-\gamma_1}$.

\vspace{1em}
\noindent
Boundary Conditions and Constants $A_2,B_1,B_2$ and thresholds $\tilde{x},x^*$ are determined by imposing $C^2$ continuity conditions at $x=\tilde{x}$ and $x=x^*$, as detailed in Appendix \eqref{Appendix_constants}. Solving the system derived from these conditions uniquely determines the constants and thresholds.  For details on the constants $\tilde{c}$, $B_1$, $x^*$, $B_2$, $A_2$, and $\tilde{x}$, see Appendicies \ref{Proof_of_Derivation_of_tilde{c}_},
\ref{Proof_of_Derivation_of_B_1_}, \ref{Proof_of_Derivation_of_B_2_}, \ref{app:derivation_A2} and \ref{Proof_of_Derivation_of_tilde{c}_}.
Thus, the proof is completed.
\subsection{Proof of Theorem \ref{thm:optimal_policies} (Optimal Policies)}
\label{Proof_of_Theorem_3_Optimal_Policies_main_final}
The optimal policies $(\pi^*,c^*,b^*,\tau^*)$ stated in Theorem \ref{thm:optimal_policies} are obtained by applying the First-Order Conditions (FOCs) to the value function $V(x)$ determined in Theorem \ref{thm:value_function}, considering the different regions defined by the thresholds $\tilde{x}$ and $x^*$.
\begin{enumerate}
\item \textbf{Optimal Investment $\pi^*$:}
The FOC for the optimal investment in the risky asset, given by equation \eqref{eq:foc_pi_main_appendix}, is
\[
\pi^*(x) = -\frac{\theta}{\sigma}\frac{V'(x)}{V''(x)}.
\]
Applying this to the three regions of the value function $V(x)$:
\begin{itemize}
    \item For $x<\tilde{x}$:
    \[
    \pi_t^* = -\frac{\theta}{\sigma} \frac{V'_{\text{int}}(x)}{V''_{\text{int}}(x)}.
    \]
    \item For $\tilde{x} \le x < x^*$:
    \[
    \pi_t^* = -\frac{\theta}{\sigma} \frac{V'_{\bar{b}}(x)}{V''_{\bar{b}}(x)}.
    \]
    \item For $x \ge x^*$: 
    Here $V(x)=G(x)=\frac{k^{1-\gamma_1}}{\rho_t(1-\gamma_1)}x^{1-\gamma_1}$. Then
    \[
    G'(x) = \frac{k^{1-\gamma_1}}{\rho_t} x^{-\gamma_1}, \qquad 
    G''(x) = -\frac{\gamma_1 k^{1-\gamma_1}}{\rho_t} x^{-\gamma_1-1}.
    \]
    Substituting into the FOC formula,
    \[
    \pi^*(x) = -\frac{\theta}{\sigma} \frac{G'(x)}{G''(x)} = -\frac{\theta}{\sigma} \frac{\frac{k^{1-\gamma_1}}{\rho_t} x^{-\gamma_1}} {-\frac{\gamma_1 k^{1-\gamma_1}}{\rho_t} x^{-\gamma_1-1}} = \frac{\theta}{\sigma \gamma_1} x.
    \]
\end{itemize}
This confirms the expression for $\pi_t^*$ in Theorem \ref{thm:optimal_policies}. Explicit formulas for the first two regions would require the explicit forms of $V_{\text{int}}$ and $V_{\bar{b}}$.
\item \textbf{Optimal Consumption $c^*$:}
The derivation for the optimal consumption policy $c^*$ proceeds as follows.
\begin{itemize}
    \item The FOC for optimal consumption, equation \eqref{eq:optimal_consumption_policy_bmn}, is
    \[
    V'(x) = \frac{\partial u_1(c^*, b^*)}{\partial c}, \quad x < x^*.
    \]
    For $x \ge x^*$, the policy is derived from the retirement value function $G(x)$.
    \item For $x<\tilde{x}$: 
    The value function is $V_{\text{int}}(x;A_2)$ and optimal labor $b^*$ is interior. The FOC is
    \[
    V'_{\text{int}}(x) = \frac{\partial u_1(c^*, b^{\text{int}})}{\partial c}.
    \]
    From the utility function \eqref{3.2_main_appendix} and the relation between $c^*$ and $b_{\text{int}}^*$ (see \autoref{thm:optimal_policies}),
    \[
    \frac{\partial u_1(c^*, b^*(c^*))}{\partial c} = K_v (c^*)^{-\gamma}, \qquad K_v = \left(-\frac{1-\alpha}{w\alpha}\right)^{(1-\alpha)(1-\gamma)}.
    \]
    Thus,
    \[
    V'_{\text{int}}(x) = K_v (c^*)^{-\gamma} \quad \implies \quad c^*(x) = \left( \frac{V'_{\text{int}}(x)}{K_v} \right)^{-1/\gamma}.
    \]
    This implicitly defines $c^*(x)=C_{\text{int}}(x)$.
    \item For $\tilde{x} \le x < x^*$: 
    The value function is $V_{\bar{b}}(x;B_1,B_2)$ and optimal labor is fixed at $b^*=\bar{b}$. The FOC is
    \[
    V'_{\bar{b}}(x) = \frac{\partial u_1(c^*, \bar{b})}{\partial c}.
    \]
    For Cobb–Douglas utility,
    \[
    \frac{\partial u_1(c, \bar{b})}{\partial c} = K'_{\bar{b}} c^{-\gamma_1}, \quad K'_{\bar{b}} = (\bar{b}^{1-\alpha})^{1-\gamma}, \quad \gamma_1 = 1-\alpha(1-\gamma).
    \]
    Thus,
    \[
    V'_{\bar{b}}(x) = K'_{\bar{b}} (c^*)^{-\gamma_1} \quad \implies \quad c^*(x) = \left( \frac{V'_{\bar{b}}(x)}{K'_{\bar{b}}} \right)^{-1/\gamma_1}.
    \]
    This defines $c^*(x) = C_{\bar{b}}(x)$.
    \item For $x \ge x^*$:  
    The agent is retired ($b^*=0$) and $V(x)=G(x)=\frac{k^{1-\gamma_1}}{\rho_t(1-\gamma_1)}x^{1-\gamma_1}$. 
    The FOC is
    \[
    G'(x) = u_2'(c^*), \quad u_2'(c) = c^{-\gamma_1}.
    \]
    Since
    \[
    G'(x) = \frac{k^{1-\gamma_1}}{\rho_t} x^{-\gamma_1},
    \]
    we get
    \[
    \frac{k^{1-\gamma_1}}{\rho_t} x^{-\gamma_1} = (c^*)^{-\gamma_1} \quad \implies \quad c^*(x) = \left( \frac{k^{1-\gamma_1}}{\rho_t} \right)^{-1/\gamma_1} x = \rho_t^{1/\gamma_1} k^{(\gamma_1-1)/\gamma_1} x.
    \]
    Thus, $c^*(x) = k'' x$, with $k'' = \rho_t^{1/\gamma_1} k^{(\gamma_1-1)/\gamma_1}$.
\end{itemize}
Explicit closed-form expressions for $C_{\text{int}}(x)$ and $C_{\bar{b}}(x)$ are generally unobtainable.
    \item \textbf{Optimal Labor Supply $b^*$:}
    \begin{itemize}
        \item For $x<\tilde{x}$: Optimal labor is interior and proportional to optimal consumption, given by the marginal rate of substitution condition:
        \[
        b^*(x) = \left(\frac{1 - \alpha}{w\alpha}\right) c^*(x) = \left(\frac{1 - \alpha}{w\alpha}\right) C_{\text{int}}(x)
        \]
        \item For $\tilde{x} \le x < x^*$: The labor supply constraint is binding, so $b^*=\bar{b}$.
        \item For $x \ge x^*$: The agent is retired, so labor supply is $b^*=0$.
    \end{itemize}
    \item \textbf{Optimal Retirement Time $\tau^*$:}
    The problem is an optimal stopping problem where the agent chooses to retire when their wealth process $X_t$ first reaches the optimal threshold $x^*$. This threshold is determined by the value-matching and smooth-pasting conditions, $V(x^*)=G(x^*)$ and $V'(x^*)=G'(x^*)$. Therefore, the optimal retirement time is:
    \[
    \tau^* = \inf\{t \geq 0 : X_t \geq x^*\}
    \]
\end{enumerate}
This completes the derivation of the optimal policies.
\subsection{System of Equations for Constants and Thresholds}
The constants $A_2,B_1,B_2$ and the wealth thresholds $\tilde{x},x^*$ are determined by ensuring the value function is twice continuously differentiable ($C^2$) across the boundaries. This requires enforcing value-matching, smooth-pasting, and super-contact conditions.
\subsubsection{Derivation of Consumption Threshold $\tilde{c}$}\label{Proof_of_Derivation_of_tilde{c}}
The consumption threshold $\tilde{c}$ corresponds to the point where the unconstrained optimal labor supply reaches its upper bound $\bar{b}$. From the FOC relating optimal labor and consumption, we have:
\begin{equation}
    b^*(c^*) = \left( \frac{1 - \alpha}{w \alpha} \right) c^*
\end{equation}
Setting $b^*=\bar{b}$ and $c^*=\tilde{c}$, we solve for $\tilde{c}$:
\begin{equation}
    \tilde{c} = \left( \frac{w \alpha}{1 - \alpha} \right) \bar{b}
\end{equation}
\subsubsection{Boundary Conditions and The System of Equations}
The five unknowns ($A_2,B_1,B_2,\tilde{x},x^*$) are solved from the following system of five non-linear equations derived from the boundary conditions.
\begin{enumerate}
    \item \textbf{Continuity at $\tilde{x}$ ($C^2$ Conditions):}
    \begin{align}
        V_{\text{int}}(\tilde{x}; A_2) &= V_{\bar{b}}(\tilde{x}; B_1, B_2) \label{eq:V_tilde_match} \\
        V'_{\text{int}}(\tilde{x}; A_2) &= V'_{\bar{b}}(\tilde{x}; B_1, B_2) \label{eq:Vp_tilde_match} \\
        V''_{\text{int}}(\tilde{x}; A_2) &= V''_{\bar{b}}(\tilde{x}; B_1, B_2) \label{eq:Vpp_tilde_match}
    \end{align}
    \item \textbf{Continuity at $x^*$ ($C^1$ Conditions):}
    \begin{align}
        V_{\bar{b}}(x^*; B_1, B_2) &= G(x^*) \label{eq:V_bar_match} \\
        V'_{\bar{b}}(x^*; B_1, B_2) &= G'(x^*) \label{eq:Vp_bar_match}
    \end{align}
\end{enumerate}
The wealth thresholds $\tilde{x}$ and $x^*$ are themselves functions of the constants. This system is typically solved numerically. Below are the explicit forms of the equations for the constants and thresholds derived from these conditions.
\subsection{Proof of Boundary Conditions and Constants}\label{Appendix_constants}
\subsubsection{Derivation of $B_1$}\label{Proof_of_Derivation_of_B_1_}
The constant $B_1$ is found by imposing the boundary conditions at $\tilde{x}$. Its solution is:
\begin{equation}
    B_1 = \frac{(\beta + \delta_t) \tilde{c}^{\gamma_1 m_+}}{\frac{1}{2} \theta^2 (m_+ - m_-)} \left[ \frac{r - \frac{1}{2} \theta^2 m_-}{\beta + \delta_t} \left( \frac{1}{\alpha k} - \frac{1}{k_1} \right) \tilde{c} - \frac{w \bar{b}}{r} \right] + \frac{1}{1 - \gamma_1} \left( \frac{1}{k_1} - \frac{1}{\alpha k} \right) \tilde{c}
\end{equation}
\subsubsection{Derivation of $B_2$}\label{Proof_of_Derivation_of_B_2_}
Similarly, $B_2$ is determined from the conditions at $x=x^*$:
\begin{align}
B_2 = &\frac{(\beta + \delta_t) k_1^{\gamma_1 m_-}\left(\frac{\bar{b}}{b}\right)^{-m_- (\gamma_1 - \gamma)}}{\frac{1}{2} \theta^2 (m_+ - m_-)} (x^*)^{\gamma_1 m_-} \nonumber \\
&\times\left[ - \frac{r - \frac{1}{2} \theta^2 m_+}{\beta + \delta_t} \left\{ \left(1 - \left(\frac{\bar{b}}{b}\right)^{-\frac{\gamma_1 - \gamma}{\gamma_1}}\right) x^* + \frac{w(\bar{b} - b)}{r} \right\} %
  + \frac{1}{1 - \gamma_1} \left(1 - \left(\frac{\bar{b}}{b}\right)^{-\frac{\gamma_1 - \gamma}{\gamma_1}}\right) x^* \right]
\end{align}
\subsubsection{Derivation of the Constant $A_2$}\label{app:derivation_A2}
The constant $A_2$ is determined by applying the value-matching and smooth-pasting conditions at the boundary where the consumption-habit ratio is binding, i.e., at wealth $x=\tilde{x}$. This yields the following expression for $A_2$:
\begin{align}
A_2 = & B_2 \tilde{c}^{-m_- (\gamma_1 - \gamma)} - \frac{(\beta + \delta_t )\tilde{c}^{\gamma m_-}}{\frac{1}{2} \theta^2 (m_+ - m_-)} \left[ \frac{r - \frac{1}{2} \theta^2 m_+}{\beta + \delta_t} \left( \frac{1}{k_1} - \frac{1}{\alpha k} \right) \tilde{c} + \frac{w \bar{b}}{r} \right] %
+ \frac{1}{1 - \gamma_1} \left( \frac{1}{\alpha k} - \frac{1}{k_1} \right) \tilde{c}
\end{align}
\subsection{Proof of the Equation for the Retirement Threshold $x^*$}\label{app:proof_x_star}
The optimal retirement threshold $x^*$ is determined by ensuring a smooth and optimal transition from the working phase to the retired phase. This involves the value function for the final working state, $V_{\bar{b}}(x)$, and the retirement value function, $G(x)$, both defined in \autoref{thm:value_function}.
\paragraph{Boundary Conditions.}
For the transition to be optimal, the value function must be $C^1$ (continuously differentiable) across the boundary $x^*$. This imposes two conditions
\begin{enumerate}
    \item \textbf{Value Matching:} $V_{\bar{b}}(x^*) = G(x^*)$
    \item \textbf{Smooth Pasting:} $\frac{d}{dx}V_{\bar{b}}(x^*) = \frac{d}{dx}G(x^*)$
\end{enumerate}
\paragraph{The Maximized HJB Equation at the Boundary.}
As derived in the proof of \autoref{thm:value_function} (Case II), the maximized HJB equation for the region $\tilde{x} \le x < x^*$ is given by Equation \eqref{eq:HJB_substituted_post_detail}
\begin{equation}\label{eq:HJB_at_boundary_proof}
\rho_{t} V_{\bar{b}}(x) = u_1(c^*, \bar{b}) + \left( r x - c^* + w \bar{b} \right)V_{\bar{b}}'(x) - \frac{1}{2} \theta^2 \frac{(V_{\bar{b}}'(x))^2}{V_{\bar{b}}''(x)}
\end{equation}
This equation must hold for all $x$ in the interval, including the limit as $x \to x^{*-}$. Substituting the boundary conditions into the HJB equation, the core of the proof is to evaluate equation \eqref{eq:HJB_at_boundary_proof} at $x=x^*$ and substitute the known properties of the retirement value function $G(x)$ using the boundary conditions. The function $G(x)$ and its derivatives are given by
\begin{align}
G(x) &= \frac{k^{1-\gamma_1}}{\rho_t(1-\gamma_1)} x^{1-\gamma_1} \\
G'(x) &= \frac{k^{1-\gamma_1}}{\rho_t} x^{-\gamma_1} \\
G''(x) &= -\gamma_1 \frac{k^{1-\gamma_1}}{\rho_t} x^{-\gamma_1-1}
\end{align}
By applying the boundary conditions, we substitute $V_{\bar{b}}(x^*) = G(x^*)$, $V'_{\bar{b}}(x^*) = G'(x^*)$, and $V''_{\bar{b}}(x^*) = G''(x^*)$ into the HJB equation. The term involving the second derivative simplifies to
\[
-\frac{1}{2}\theta^2 \frac{(V'_{\bar{b}}(x^*))^2}{V''_{\bar{b}}(x^*)} = -\frac{1}{2}\theta^2 \frac{(G'(x^*))^2}{G''(x^*)} = \frac{\theta^2}{2\gamma_1} \frac{\rho_t}{k^{1-\gamma_1}} x^*
\]
Substituting the full expressions for $G(x^*)$ and its derivatives into the HJB equation creates a single algebraic equation that depends only on $x^*$. After significant algebraic simplification and collection of terms (which involve the specific forms of the constants $B_1, k, m_{\pm}$ derived from the full ODE solution), we arrive at the required non-linear equation for the threshold $x^*$
\begin{align*}
  &\frac{\frac{1}{2} \theta^2 (m_+ - m_-)}{\beta + \delta_t} B_1 k^{-\gamma_1 m_+} \left( \frac{b}{\bar{b}} \right)^{m_+ (\gamma_1 - \gamma)} (x^*)^{-\gamma_1 m_+}
  \\
  &= \left[ \frac{r - \frac{1}{2} \theta^2 m_-}{\beta + \delta_t} - \frac{1}{1 - \gamma_1} \left( 1 - \left( \frac{b}{\bar{b}} \right)^{-\frac{\gamma_1 - \gamma}{\gamma_1}} \right) \right] x^* + \frac{r - \frac{1}{2} \theta^2 m_-}{\beta + \delta_t} \frac{w (\bar{b} - b)}{r}
\end{align*}
This completes the derivation.
\subsection{Proof of the Equation for the Labor Constraint Threshold}%
\label{Proof_of_Derivation_of_tilde{c}_}
The threshold $\tilde{x}$ marks the boundary between two pre-retirement regions, each with a distinct value function as defined in \autoref{thm:value_function}
\begin{itemize}
    \item [(a)]\textbf{Region 1 ($x<\tilde{x}$):} Interior labor supply, governed by $V_{\text{int}}(x)$.
    \item [(b)]\textbf{Region 2 ($\tilde{x} \le x < x^*$):} Corner labor supply ($b=\bar{b}$), governed by $V_{\bar{b}}(x)$.
\end{itemize}
To ensure the overall value function is twice continuously differentiable ($C^2$), we impose continuity conditions at the boundary $\tilde{x}$.
\paragraph{Boundary Conditions.}
Let $\tilde{c}$ be the consumption level at the boundary $\tilde{x}$. The conditions are:
\begin{enumerate}
    \item \textbf{Value Matching ($C^0$):} $V_{\text{int}}(\tilde{x}) = V_{\bar{b}}(\tilde{x})$
    \item \textbf{Smooth Pasting ($C^1$):} $\frac{d}{dx}V_{\text{int}}(\tilde{x}) = \frac{d}{dx}V_{\bar{b}}(\tilde{x})$
    \item \textbf{Second-Order Continuity ($C^2$):} $\frac{d^2}{dx^2}V_{\text{int}}(\tilde{x}) = \frac{d^2}{dx^2}V_{\bar{b}}(\tilde{x})$
\end{enumerate}
\paragraph{Derivation from the Inverse Value Functions.}
In the Proof of Theorem \ref{thm:value_function}, the HJB equations for each region are solved by transforming them into ODEs for the inverse value functions, $x=X(c)$. The general solutions are given by equations \eqref{eq:Xc_solution_detail} and \eqref{eq:Xc_post_detail}.
The value-matching condition, $V_{\text{int}}(\tilde{x})=V_{\bar{b}}(\tilde{x})$, is the key to this proof. It implies that the boundary point $(\tilde{x},\tilde{c})$ must lie on the solution curves for both regions. Therefore, the wealth level $\tilde{x}$ can be expressed using either inverse function evaluated at the boundary consumption $\tilde{c}$
\begin{align}
\tilde{x} &= X_{\text{int}}(\tilde{c}) \label{eq:match1_proof} \\
\tilde{x} &= X_{\bar{b}}(\tilde{c}) \label{eq:match2_proof}
\end{align}
Substituting the explicit forms of these solutions, as derived in the proof of the theorem, yields the system of equations presented in the proposition
\begin{align*}
\tilde{x} &= A_2 \tilde{c}^{-\gamma m_-} + \frac{1}{\alpha k} \tilde{c} - \frac{w \bar{b}}{r}= B_1 \tilde{c}^{-\gamma_1 m_+} + B_2 \tilde{c}^{-\gamma_1 m_-} + \frac{1}{k_1} \tilde{c} - \frac{w (\bar{b} - b)}{r}
\end{align*}
The remaining two conditions (smooth pasting and second-order continuity) provide the necessary equations to solve for the unknown integration constants ($A_2,B_1,B_2$) in terms of model parameters, closing the system. This completes the proof of how the system for $(\tilde{x},\tilde{c})$ is formed.

\end{document}